\documentclass{article}

\usepackage{arxiv}

\usepackage[utf8]{inputenc} % allow utf-8 input
\usepackage[T1]{fontenc}    % use 8-bit T1 fonts
\usepackage{hyperref}       % hyperlinks
\usepackage{url}            % simple URL typesetting
\usepackage{booktabs}       % professional-quality tables
\usepackage{amssymb}
\usepackage{amsmath}
\usepackage{nicefrac}       % compact symbols for 1/2, etc.
\usepackage{microtype}      % microtypography
\usepackage{lipsum}		% Can be removed after putting your text content
\usepackage{pdflscape}
\usepackage{booktabs}
\usepackage{graphicx}
\usepackage{subcaption}
\usepackage{longtable}
\usepackage{tabularx}
\usepackage{enumitem}
\usepackage[table]{xcolor}
\usepackage{multirow} % セルを縦に結合するために使用
\usepackage{array}
\usepackage{url}
\usepackage[numbers]{natbib}
\usepackage{doi}

\usepackage{amsthm}

\theoremstyle{definition}
\newtheorem{dfn}{Definition}[section]

\newtheorem{lem}[dfn]{Lemma}
\newtheorem{thm}[dfn]{Theorem}

\title{``Sail Fast, Then Wait'' in First-come, First-served Port Queues: Information Sharing for Sustainable Shipping}

\author{ \href{https://orcid.org/0000-0002-4774-1506}{\includegraphics[scale=0.06]{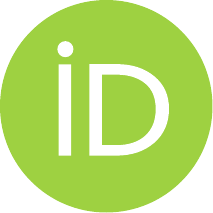}\hspace{1mm}Ayato Kitadai}\thanks{\texttt{a.kitadai@css.t.u-tokyo.ac.jp}}\\
    School of Engineering\\
	The University of Tokyo\\\\
	\And
	Shunta Yoshimura \\
    Department of Engineering\\
	The University of Tokyo\\
    \And
	Takuya Nakashima \\
    School of Engineering\\
	The University of Tokyo\\
    \And
	Noora Torpo \\
    NAPA Ltd\\
    \And
	Rei Miratsu \\
    Nippon Kaiji Kyokai\\
    \And
	Naoki Mizutani \\
    NAPA Japan\\
    \And
	\href{https://orcid.org/0000-0002-6411-8716}{\includegraphics[scale=0.06]{orcid.pdf}\hspace{1mm}Nariaki Nishino} \\
    School of Engineering\\
	The University of Tokyo\\
}

\renewcommand{\shorttitle}{SFTW in FCFS Port Queues: Information Sharing for Sustainable Shipping}

\hypersetup{
pdftitle={``Sail Fast, Then Wait'' in First-come, First-served Port Queues: Information Sharing for Sustainable Shipping},
pdfsubject={q-bio.NC, q-bio.QM},
pdfauthor={Ayato Kitadai},
pdfkeywords={Queueing games, Information sharing, Port congestion, Sustainable shipping, AIS data},
}

\begin{document}
\maketitle

\begin{abstract}
This study develops a novel class of queueing game to explain a common practice in cargo shipping ``Sail Fast, Then Wait'' (SFTW), and demonstrates that resolving information asymmetry among ships can deconcentrate port arrival times.
We formulate a competitive navigating environment as an incomplete information game where players strategically decide their arrival time within heterogeneous feasible sets under First-Come, First-Served port policy.
Our results show that in incomplete information settings, SFTW emerges as the unique symmetric equilibrium.
Conversely, under complete information, the set of equilibria expands, allowing for slower and more environmentally friendly actions without compromising service order.
We further quantitatively evaluate the effect of information enrichment based on empirical data.
Our findings suggest that the prevalence of technologies enabling ships to infer others' private information can effectively reduce SFTW and enable more energy-efficient and environmentally sustainable operations.
\end{abstract}

% keywords can be removed
\keywords{Queueing games \and Information sharing \and Port congestion \and Sustainable shipping \and AIS data}

\section{Introduction}
During the 19th century's ``Great Tea Race,'' clipper ships competed to be the first to dock in London, securing premiums for the earliest arrivals~\cite{MacGregor_1983}. 
This competition cemented a mentality of speed in maritime trade, where sailing fast was not merely desirable but commercially advantageous.
This mindset, though originating in the age of sail, persists today in what is widely recognized as the ``Sail Fast, Then Wait (SFTW)'' phenomenon\footnote{This is also called \textit{Rush to Wait.}}: ships maximize their speed to arrive at ports as early as possible, only to wait at anchor for berthing~\cite{Sung_2022}.

While the context has changed, the structure remains.
The persistence of SFTW in modern maritime logistics is not a remnant of irrational behaviour but rather the product of institutional incentives.
Most notably, many ports allocate berths based on a First-Come, First-Served (FCFS) policy.
Simultaneously, traditional charter party contracts have long stipulated that vessels must proceed with ``utmost dispatch,'' effectively obligating ship operators to prioritize speed regardless of congestion at destination ports~\cite{Alvarez_2010}.
In addition, demurrage clauses incentivize early arrival: as soon as a ship drops anchor, laytime begins, positioning the shipowner to claim compensation for port delays~\cite{Adland_2018}.
In this environment, arriving early is not merely prudent but rational, even if it implies long idle times and inefficiencies.

At the same time, mounting scrutiny of greenhouse gas (GHG) emissions has begun to challenge this logic.
The International Maritime Organization adopted an enhanced GHG strategy in 2023 that sets a net-zero ambition around 2050 with interim checkpoints~\cite{IMO_2023_GHG}.
Contractual and operational instruments are evolving accordingly.
BIMCO’s Just in Time Arrival Clause for voyage charters enables owners and charterers to share information and adjust speed to meet an agreed time of arrival~\cite{BIMCO_JIT_2021}.
Industry initiatives further explore coordinated arrival management; for example, the Blue Visby Consortium has presented its coordinated arrival approach and prototype trials to IMO as an information paper~\cite{BlueVisby_MEPC82}.
These developments reflect a growing recognition that SFTW behaviour, once protected by contract and custom, should give way to more sustainable practices.

Yet despite these promising directions, a fundamental theoretical gap remains.
Specifically, while it is widely assumed that changing contractual terms or sharing arrival information can mitigate SFTW~\cite{Jimenez_2021}, the underlying question of why SFTW emerges as a rational strategy in the first place has not been formally resolved.
Existing studies have modelled berth allocation or optimised ship routing~\cite{Wang_2025, Zhen_2025, Jensen_2025}, but as far as we know, none have captured the strategic decision-making process behind ETA selection in FCFS competitive environments, particularly under incomplete information.

This paper seeks to address that gap.
We present a game-theoretic model of ship arrival under FCFS berth allocation rules, where vessels differ in their earliest possible arrival times and face incomplete information about the earliest feasible arrival times (capabilities) of other vessels.

Methodologically, our model belongs to the class of finite-player FCFS bottleneck games.
Within this class we introduce three features that are particularly suited to competitive port calls.
First, types are earliest feasible arrival times rather than waiting cost parameters, so heterogeneity reflects technological constraints that can be inferred from vessel tracking data such as AIS (Automatic Identification System) data instead of unobservable preferences.
Second, actions are continuous arrival times constrained by these types, and queueing is fully endogenous under FCFS service.
Third, we employ minimal preference assumptions: each vessel strictly prefers earlier service completion to later completion and is indifferent across arrival and waiting time decompositions that yield the same completion time.
All equilibrium results are derived on this continuous action space under these ordinal preferences, which makes our characterisation of SFTW behaviour robust to the exact specification of delay disutility.

Based on this model, we first show that in such an environment, SFTW behaviour arises as the unique symmetric Bayesian Nash equilibrium: all ships rush to arrive as early as possible.
Importantly, this rational outcome emerges not from misaligned preferences or bounded rationality, but from the structure of the game itself.

We then examine the counterfactual scenario of a complete information environment.
We demonstrate that when vessels' capabilities become common knowledge, the set of equilibria expands, allowing players to rationally choose later ETAs without suffering utility loss.
This shift implies that information sharing alone can reduce congestion and enable more environmentally sustainable behaviours.

Understanding SFTW as a rational equilibrium behaviour is essential.
Without this structural understanding, it is difficult to rigorously evaluate the true impact of revised contractual clauses or new coordination mechanisms. 
Simply permitting slow steaming in a contract does not guarantee its adoption.
Instead, only by grasping how incentive structures shape strategic choices can we assess the effectiveness and limits of institutional reforms.
In this light, our model not only explains persistent inefficiencies but also offers a theoretical foundation for designing mechanisms that align private incentives with collective sustainability goals.

The main contributions of this paper are summarized as follows:
\begin{enumerate}
    \item \textbf{Theoretical Contribution:} We formulate a finite-player FCFS bottleneck game in continuous time in which heterogeneity arises from technological constraints, represented by earliest feasible arrival times, while preferences over service completion times remain homogeneous and ordinal.
    Within this framework we characterise the pure strategy Nash equilibria under complete information and prove that, under incomplete information regarding these constraints, SFTW emerges as the unique symmetric Bayesian Nash equilibrium.

    \item \textbf{Managerial Implication:} We demonstrate that simply sharing information about feasibility expands the set of equilibria.
    This reveals that ``Green Navigation'' (slower steaming) can be established as a focal equilibrium without altering the service order, thereby mitigating SFTW within the existing FCFS framework.
    \item \textbf{Empirical Quantification:} By applying our model to AIS data from Port Hedland, we quantify the distribution of the hidden ``slack'' in current arrival sequences. 
    In this highly congested setting, the estimated slack is concentrated in a subset of voyages, implying that utility-preserving slow steaming opportunities are heterogeneous and context-dependent.
    Nevertheless, the analysis provides an empirical benchmark for the magnitude and prevalence of coordination gains that can be unlocked through information transparency within the existing FCFS framework.
\end{enumerate}

The remainder of this paper is organized as follows.
Section~\ref{sec:literature} reviews the related literature.
Section~\ref{sec:model} formulates the game-theoretic model of ship arrival under FCFS competition.
Section~\ref{sec:analysis} presents the equilibrium analysis, deriving the unique symmetric Bayesian Nash equilibrium under incomplete information and the expanded equilibrium set under complete information.
Section~\ref{sec:empirical} applies our theoretical framework to real-world AIS data from Port Hedland, discussing equilibrium selection and quantifying the potential slack available for slow steaming.
Section~\ref{sec:discussion} discusses the implications of our findings for maritime policy and acknowledges limitations.
Finally, Section~\ref{sec:conclusion} concludes the paper.

\section{Literature Review}\label{sec:literature}

Strategic timing of arrivals to a waiting queue has long been studied~\cite{Hassin_2003, Haviv_2021}.
Since Vickrey~\cite{Vickrey_1969}, much of the literature has analysed congestion dynamics using fluid models and characterised equilibrium arrival patterns for a continuum of users~\citep{Arnott_1990, Arnott_1993, Palma_2013, Platz_2017, Wu_2019, Li_2020, Yu_2025}.
Such limits are often effective approximations of discrete phenomena.
However, they can be inappropriate in settings with a modest number of strategic players such as vessels calling at the same port.

To capture these contexts, a stream of papers studied games with a finite number of players.
Most of this work assumes homogeneous agents and focuses on symmetric equilibria~\citep{Rapoport_2004, Otsubo_2008, Rapoport_2010, Breinbjerg_2016, Breinbjerg_2017, Sakuma_2020, Alon_2022, Alon_2023, Breinbjerg_2024}.
While analytically convenient, homogeneity is a poor fit for situations such as maritime operations where ships differ in technical capabilities and voyage conditions.

There is also a literature that incorporates heterogeneity in queueing games and bottleneck models.
These papers typically model preference heterogeneity, mainly heterogeneous waiting costs, and study the resulting equilibria~\citep{Ha_2001, Afeche_2013, Gavirneni_2016, Wan_2017, Silva_2017, Maglaras_2018}.
This modelling is highly important because, in service contexts, considering heterogeneity among user preferences is essential for service providers~\citep{Islier_2024, Tuncalp_2024}.
In contrast, the salient heterogeneity in our context lies in feasible action sets: ships face different earliest feasible arrival times driven by position and condition.

Another longstanding theme is information availability in queueing systems~\citep{Naor_1969, Edelson_1975, Economou_2022}.
Many studies examine how information provided to players, usually about the current queue length, alters strategic behaviour~\citep{Islier_2024, Tuncalp_2024, Hassin_1986, Seale_2005, Stein_2007, Guo_2007, Simhon_2016, Kim_2017, Hu_2018, Hassin_2020, Dimitrakopoulos_2021, Economou_2024}.
By contrast, our analysis centres on information about others' feasible actions (types), namely their earliest feasible arrival times.
To our knowledge, the strategic effects of revealing such feasibility information in FCFS arrival games have received limited attention.

In the specific domain of maritime operations, the SFTW phenomenon is well documented.
However, existing studies typically treat arrival times as exogenous or focus on berth allocation optimisation~\citep{Zhen_2025, Jensen_2025, Zhen_2015}, rarely endogenising the strategic interaction among vessels competing for a bottleneck under information asymmetry.

Taken together, the literature establishes rich insights into strategic arrivals under congestion but leaves a clear gap at the intersection of finite-player FCFS environments, heterogeneity in physical feasibility rather than waiting costs, and information structures regarding these constraints.
Most existing models either assume homogeneous players or focus on preference-based heterogeneity, and they rarely capture the strategic implication of feasibility information in maritime bottlenecks.
Within this existing class of finite-player FCFS arrival games, our model introduces technological heterogeneity in earliest feasible arrival times as private information.
By analysing a Bayesian timing game with a continuum of such types, focusing on symmetric strategies, alongside its complete information counterpart, we provide a microfoundation for the SFTW phenomenon and explore information-based solutions for sustainable shipping.

\section{Model}\label{sec:model}
In this study, we formulate the queueing situation as a simultaneous-move game with finite number of players.
Once players decide the time at which they will reach the destination, they navigate accordingly.
We clearly distinguish between arrival time and service time.
Arrival time corresponds to the players' own decision making but they may not be able to come to shore just after the arrival because of queue.
Since the arrival time, players experience waiting time and then reach to the service time.
Importantly, we assume that no queue exists when the game begins.
This assumption is without loss of generality; as discussed in Section~\ref{subsec:complete_info}, a non-zero initial backlog can be treated as a fixed time constraint without altering the strategic structure of the game.

Let $N=\{1,2,\dots,n\}$ denote the set of vessels going to the same port from different locations, and assume that each vessel has distinct technical characteristics such as maximum speed.
Time is continuous and indexed by $\mathbb{R}$.
Normalizing the port's processing capacity to one unit of workload per unit time, we assume that every vessel carries a workload of $\gamma>0$; equivalently, the port requires $\gamma$ units of time to complete service for a single vessel.
Each vessel $i\in N$ simultaneously chooses a desired arrival time $s_i\in S_i\equiv[t_i,\infty)$, computes a route and speed profile so that its estimated time of arrival (ETA) equals $s_i$, and sails accordingly.
Here, $t_i\in\mathbb{R}_{\ge0}$ is the theoretical lower bound on the arrival times that vessel $i$ can feasibly attain.
Because $t_i$ depends on technical capability, current location, and departure time, it may differ across vessels and is considered as private information.
Without loss of generality, we can assume that $t_1\le t_2\le \dots \le t_n$.
Note that while this structure could formally be represented by a type-dependent utility assigning low payoffs to infeasible choices, we encode these technological constraints directly.
This allows $t_i$ to retain a clear physical interpretation recoverable from AIS data.

At most one vessel can be processed by the port at any time, and service is provided on a first-come, first-served (FCFS) basis.
As a tie-breaking rule, when multiple vessels arrive simultaneously, their service order is determined at random.
Upon arriving at time $s_i\in\mathbb{R}_{>0}$, vessel $i$ must wait until the immediately preceding vessel's service is completed; denote this waiting time by $w_i(s_i,s_{-i})$.
Whenever a backlog exists (i.e., some waiting time is strictly positive), the port operates without idleness once service has begun.

Let $g:\mathbb{N}\to N$ be a bijection mapping service positions to vessels, so $g(k)$ is the vessel in the $k$-th service position, and let $g^{-1}$ be its inverse.
Then the waiting time of vessel $i\in N$ under service order $g$ is described as follows.\footnote{See \ref{app1} for the derivation.}
\begin{equation}
    w_i^g(s_i, s_{-i}) = \max_{1\le k\le g^{-1}(i)}\big\{\,s_{g(k)} + \big(g^{-1}(i) - k\big)\gamma - s_i\,\big\}
    \label{eq:waiting_time}
\end{equation}

Because simultaneous arrivals are tie-broken at random, multiple service orders $g$ may arise.
Let $G(s_i,s_{-i})$ denote the set of all such orders consistent with the strategy profile $(s_i,s_{-i})$.
The ex-ante expected waiting time for player $i\in N$ is then
\begin{equation}
    \mathbb{E}\big[w_i(s_i, s_{-i})\big] = \sum_{g\in G(s_i, s_{-i})} \frac{1}{\lvert G(s_i, s_{-i})\rvert}\, w_i^g(s_i, s_{-i}) .
\end{equation}

Each player wishes to complete service and deliver the cargo as early as is operationally feasible.
Since the service time equals the sum of the chosen arrival time and the realized waiting time, we refer to $s_i + \mathbb{E}\bigl[w_i(s_i, s_{-i})\bigr]$ as the expected service completion time under the profile $(s_1,\dots,s_n)$.
The utility of player $i\in N$ given an arrival-time profile $(s_i)_{i\in N}\in \mathbb{R}_{\ge 0}^n$ is defined via a function $f_i: S_i\times\mathbb{R}\to\mathbb{R}$ as
$u_i(s_i, s_{-i}) = f_i\left( s_i, \mathbb{E}\left[w_i(s_i, s_{-i})\right] \right)$.
Here, we assume that, for arbitrary player $i\in N$, the function $f_i$ satisfies the following characteristics:

\begin{enumerate}
    \item Earlier expected service time is strictly preferred:
    \begin{equation*}
        s_i, s_i' \in S_i,\; w_i, w_i' \in \mathbb{R}_{\ge0},\; s_i + w_i < s_i' + w_i' \Rightarrow f_i(s_i, w_i) > f_i(s_i', w_i'),
    \end{equation*}

    \item Waiting time itself does not matter:
    \begin{equation*}
        s_i, s_i' \in S_i,\; w_i, w_i' \in \mathbb{R}_{\ge 0},\; s_i + w_i = s_i' + w_i' \Rightarrow f_i(s_i, w_i) = f_i(s_i', w_i').
    \end{equation*}
\end{enumerate}

In this study, we consider two kinds of information structure to clarify the effect of information availability to the strategic arrival times.
The first is complete information game where all players know other players' types $t_{-i}$.
The second is incomplete information game where no player know others' types.
% The last is the intermediate case where some players know all other players' types $t_{-i}$ but the others only know their own type.
Under each information environment, each player make decision to maximize their own utility.

\section{Equilibrium Analysis}\label{sec:analysis}
% This study focuses on information structures regarding players' feasible actions.
% By comparing equilibria under these environments, we can reveal the effect of information.

\subsection{Complete Information Game}\label{subsec:complete_info}
Firstly, for the ease of analysis, we consider the complete information game and mathematically find pure strategy Nash equilibria.
As a preparation, we define the expected service order and show one important Lemma.

\begin{dfn}
    Given a strategy profile $(s_1,\dots,s_n)$, we define the expected service order of player $i\in N$ as
    \begin{equation}
        \mathbb{E}\bigl(g_{(s_i, s_{-i})}^{-1}(i)\bigr) := \sum_{g\in G(s_i, s_{-i})} \frac{1}{\lvert G(s_i, s_{-i})\rvert}\, g^{-1}(i).
    \end{equation}
\end{dfn}

\begin{lem}\label{lem:faster_better}
For any strategy profile $(s_1,\dots,s_n)$, if a player can improve her expected service order, she can strictly improve her payoff by that.
\end{lem}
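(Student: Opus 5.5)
The plan is to reduce the lemma to a statement about expected service \emph{start} times. By the two assumptions on $f_i$, player $i$'s utility depends on her strategy only through $C_i(s_i,s_{-i}) := s_i + \mathbb{E}[w_i(s_i,s_{-i})]$, and is strictly decreasing in it. So it suffices to show the following: if $i$ deviates from $s_i$ to some $s_i'\in S_i$ with
\[
\mathbb{E}\bigl(g^{-1}_{(s_i',s_{-i})}(i)\bigr) < \mathbb{E}\bigl(g^{-1}_{(s_i,s_{-i})}(i)\bigr),
\]
then $C_i(s_i',s_{-i}) < C_i(s_i,s_{-i})$. I read ``can improve her expected service order'' as the existence of such a feasible deviation.

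\textbf{Step 1: a recursive form of start times.} From \eqref{eq:waiting_time}, under a fixed order $g$ the start time of the vessel in position $k$ is
\[
T_k = \max_{1\le k'\le k}\{s_{g(k')} + (k-k')\gamma\},
\]
which equals $\max\{s_{g(k)},\,T_{k-1}+\gamma\}$ with $T_0=-\infty$. Two consequences follow. First, $T_k \ge T_{k-1}+\gamma$, and $T_k$ is nondecreasing in $k$. Second, the start times of the vessels ahead of $i$ depend only on the arrivals of vessels ahead of them, never on $s_i$.

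\textbf{Step 2: a pathwise comparison, assuming no ties.} Since positions are determined by FCFS, a lower position can only come from an earlier arrival, so $s_i' \le s_i$. With $s_i' < s_i$ and no ties, $i$'s new predecessor set is the old one with some vessels removed. Let $j$ be the earliest-served of the removed vessels. The new predecessors of $i$ are exactly $j$'s predecessors. Their schedule is unchanged, and $j$'s start time $T_j$ is at least the time at which that predecessor block clears. Hence
\[
T_i' = \max\{s_i',\ \text{clearing time of that block}\} \le \max\{s_i', T_j\} = T_j,
\]
using $s_i' \le s_j \le T_j$. In the old schedule $i$ came after $j$, so by Step 1, $T_i \ge T_j + \gamma > T_j \ge T_i'$. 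This gives a strict improvement.

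\textbf{Step 3: ties and expectations, the main obstacle.} Random tie-breaking means each profile induces a uniform distribution over the orders in $G$, and the hypothesis concerns only the \emph{expected} position. I would handle this by grouping orders according to $i$'s realized position. For a fixed profile, I would show that all orders placing $i$ at the same position give $i$ the same start time. The reason is that the vessels tied with $i$ have equal arrival times, so permuting them among themselves leaves the multiset of predecessor arrivals unchanged. I would then show that $i$'s start time is strictly increasing in her realized position within a profile, by Step 1, since each additional predecessor adds at least $\gamma$.

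The step I expect to be delicate is comparing two profiles whose position distributions are not nested. The natural route is a coupling: compare the new profile's best (lowest) position realization with the old profile's worst (highest) one and apply the Step 2 argument. This works when the deviation strictly crosses some arrival time. It also covers the case where $i$ moves from a strict position onto a tie, or from a tie to strictly ahead. The remaining cases are those where $s_i'$ joins or leaves a tie group, and I would check them directly: there the start time is an affine function of the realized position with slope at least $\gamma$, so a lower expected position gives a lower expected start time. Combining these cases yields $C_i(s_i',s_{-i}) < C_i(s_i,s_{-i})$, and hence, by assumption 1 on $f_i$, a strictly higher payoff.
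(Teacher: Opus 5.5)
Your Steps 1 and 2 are sound. Step 2 is the same pathwise comparison the paper carries out in its inequality chain, and your version makes the strict margin $T_i\ge T_i'+\gamma$ explicit.

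The gap is in Step 3, which is where the lemma's real content lives: random tie-breaking and \emph{expected} positions. There the decisive comparison is stated backwards. Write $T_i(g)=s_i+w_i^{g}(s_i,s_{-i})$ and $T_i'(g')=s_i'+w_i^{g'}(s_i',s_{-i})$. Comparing the new profile's lowest-position realization with the old profile's highest-position one only yields $\min_{g'}T_i'(g')<\max_{g}T_i(g)$. That says nothing about $\mathbb{E}[T_i']$ versus $\mathbb{E}[T_i]$. You need the opposite pairing, $\max_{g'}T_i'(g')\le\min_{g}T_i(g)$.

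Your remark about crossing an arrival time suggests you may have meant this. Even so, you never isolate the fact it requires: the two position distributions are always \emph{ordered}. Since $s_i'<s_i$, every vessel that can precede $i$ under $(s_i',s_{-i})$ arrives at or before $s_i'<s_i$, so it precedes $i$ in every order of $G(s_i,s_{-i})$. Hence $\max_{g'\in G(s_i',s_{-i})}g'^{-1}(i)\le\min_{g\in G(s_i,s_{-i})}g^{-1}(i)$, which is the paper's \eqref{eq:minmax_g}. The same fact, with roles swapped, is what justifies your claim that a lower expected position forces $s_i'<s_i$ when there are ties.

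Your fallback for the ``joining/leaving a tie group'' cases has the same hole. Within each profile the start time is affine in position with slope $\gamma$. But the two profiles have different intercepts, $\max\{s_i',\cdot\}$ versus $\max\{s_i,\cdot\}$. So ``lower expected position gives lower expected start time'' still needs a cross-profile inequality at the shared position, which you do not supply.

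Once you have the ordering, no case split is needed, provided you choose the comparison vessel by position. Your choice of ``the earliest removed vessel'' is identity-based and breaks under ties. Fix any $g\in G(s_i,s_{-i})$ and $g'\in G(s_i',s_{-i})$, and set $q=g'^{-1}(i)\le p=g^{-1}(i)$.

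\begin{itemize}
\item Positions $1,\dots,q-1$ of $g$ and of $g'$ carry the same sorted arrival values: in both they are the $q-1$ earliest arrivals among $N\setminus\{i\}$. So they clear at a common time $c$.
\item If $q=p$, then $T_i'(g')=\max\{s_i',c+\gamma\}\le\max\{s_i,c+\gamma\}=T_i(g)$.
\item If $q<p$, then $g(q)$ carries the $q$-th earliest arrival among $N\setminus\{i\}$. In $g'$ this arrival sits behind $i$, so $s_{g(q)}\ge s_i'$. By your Step 1, $T_i'(g')\le\max\{s_{g(q)},c+\gamma\}\le T_i(g)-(p-q)\gamma$.
\end{itemize}

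Now average over $G(s_i,s_{-i})\times G(s_i',s_{-i})$, as the paper does. Every pair gives a weak inequality, and pairs with $q<p$ give a strict one. Such pairs exist because the expected position strictly improves.

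Equivalently, using your within-profile monotonicity, $\mathbb{E}[T_i']\le\max T_i'\le\min T_i\le\mathbb{E}[T_i]$. Strictness comes from a non-degenerate support, or, when both supports are singletons, from the $\gamma$ margin.
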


If a player set her ETA earlier to overtake other players, her waiting time may increase.
However, this lemma suggests that the shortened portion of arrival time for overtaking is larger than the extended portion of expected waiting time.
The proof is given in \ref{app2}.

Lemma~\ref{lem:faster_better} excludes the possibility of not deliberately overtaking, which leads fierce competitive environments.
Thus, players will set their ETA so as not to be overtaken by any players whose type is larger than his own.

\begin{lem}\label{lem:no-takeover}
    $S_{n+1} := \{\infty\}$.
    If a strategy profile $(s_1^*, \dots, s_n^*)$ is a Nash equilibrium, $\forall i\in N, s_i^* \in \{t_i\}\cup (S_i\setminus S_{i+1})$.
\end{lem}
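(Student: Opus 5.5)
The plan is to argue by contradiction and derive every contradiction from Lemma~\ref{lem:faster_better}. In each case I exhibit a feasible unilateral deviation that strictly improves some player's expected service order.

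\textbf{Reformulation.} Since $S_i\setminus S_{i+1}=[t_i,t_{i+1})$, with $S_n\setminus S_{n+1}=[t_n,\infty)$, the statement is empty for $i=n$. For $i<n$ it says: if $t_i<t_{i+1}$ then $s_i^*<t_{i+1}$, and if $t_i=t_{i+1}$ then $s_i^*=t_i$. So suppose some $i<n$ violates this at a Nash equilibrium $(s_1^*,\dots,s_n^*)$. In both sub-cases we then have $s_i^*\ge t_{i+1}\ge t_i$ and $s_i^*>t_i$ or $s_i^*\ge t_{i+1}>t_i$. In particular, $i$ can feasibly choose times strictly earlier than $s_i^*$, down to $t_i$.

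\textbf{Basic monotonicity fact.} Moving one's arrival time earlier, with all others fixed, never moves any other player from behind $i$ to ahead of $i$. It also turns each ``tie with $j$'' or ``behind $j$'' into ``tie with $j$'' or ``ahead of $j$''. Hence if such a move changes $i$'s relation to at least one player $j$ in this way, $\mathbb{E}(g^{-1}(i))$ strictly decreases. This follows from the uniform random tie-breaking: being tied with $j$ contributes a $1/2$ chance of being behind $j$, while strictly preceding contributes $0$. I would state this as a small auxiliary observation.

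\textbf{Case (a): some $j>i$ has $s_j^*\le s_i^*$.}
\begin{itemize}
\item If $s_j^*>t_i$, player $i$ deviates to any $s_i'\in[t_i,s_j^*)$. This is feasible, places $i$ strictly ahead of $j$, and by the monotonicity fact strictly improves $i$'s expected order.
\item If $s_j^*=t_i$, then $t_i=t_j=t_{i+1}$ and $s_i^*>t_i$. Player $i$ deviates to $t_i$, turning ``behind $j$'' into ``tied with $j$'' (or an existing tie into a tie with fewer players behind). This is again a strict improvement.
\end{itemize}
Either way Lemma~\ref{lem:faster_better} contradicts equilibrium.

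\textbf{Case (b): every $j>i$ has $s_j^*>s_i^*$.} Take $j=i+1$, so that $t_{i+1}\le s_i^*<s_{i+1}^*$ and $i+1$ is currently strictly behind $i$.
\begin{itemize}
\item If $t_{i+1}<s_i^*$, player $i+1$ deviates to some time in $[t_{i+1},s_i^*)$ and overtakes $i$.
\item If $t_{i+1}=s_i^*$, player $i+1$ deviates to $t_{i+1}$ and ties with $i$, so that it is ahead of $i$ with probability $1/2$.
\end{itemize}
In both sub-cases the monotonicity fact gives a strict improvement in $i+1$'s expected order, contradicting equilibrium by Lemma~\ref{lem:faster_better}.

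\textbf{Main obstacle.} The delicate step is the bookkeeping of expected service order under ties. One has to check that an earlier deviation cannot, through the random tie-breaking among several simultaneous arrivals, worsen the expected position relative to some third party while improving it relative to $j$. I would handle this by writing $\mathbb{E}(g^{-1}(i))=1+\sum_{k\ne i}\Pr[k\text{ served before }i]$. Each summand is $1$, $1/2$, or $0$ according as $s_k<s_i$, $s_k=s_i$, or $s_k>s_i$. Each term is nonincreasing as $s_i$ decreases, which makes the monotonicity fact immediate.
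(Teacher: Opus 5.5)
Your proposal is correct and follows essentially the paper's route. You argue by contradiction, exhibit in each case an earlier feasible deviation by player $i$ or player $i+1$ that strictly improves that player's expected service order, and invoke Lemma~\ref{lem:faster_better}; your split (some $j>i$ with $s_j^*\le s_i^*$ versus none) is a minor variant of the paper's split by the sign of $s_{i+1}^*-s_i^*$. Your pairwise decomposition $\mathbb{E}(g^{-1}(i))=1+\sum_{k\ne i}\Pr[k\text{ served before }i]$ makes the tie bookkeeping cleaner than the paper's, whose first case asserts $t_i<s_{i+1}^*$ and so glosses over the subcase $t_i=t_{i+1}=s_{i+1}^*<s_i^*$ that your second bullet of Case (a) handles explicitly.
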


Taking the contraposition of Lemma~\ref{lem:faster_better}, we can obtain a requirements for Nash equilibria that no one can improve her expected service order.
Thus, players in Nash equilibria will set their arrival times not to be overtaken by any other players.
Formal proof is given in \ref{app3}.

As a last preparation to find Nash equilibria, we will show one intuitively evident lemma.
This means that, under a given strategy profile, if some players can be switched just depending on random seeds of tie breaking, their ETA must be common.

\begin{lem}\label{lem:same_arrival_time}
    $\forall s\in \prod_{i\in N}S_i, \exists g, g' \in G(s), \exists k\in\mathbb{N}, g(k)\neq g'(k)\Rightarrow s_{g(k)} = s_{g'(k)}$
\end{lem}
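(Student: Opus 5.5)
The plan is to make precise what $G(s)$ is and then argue by contradiction. Any service order produced by FCFS with random tie-breaking has a basic property: if $s_i < s_j$, then $i$ is served before $j$. So for every $g \in G(s)$ the sequence $k \mapsto s_{g(k)}$ is nondecreasing in $k$. Conversely, $G(s)$ is exactly the set of bijections $g:\{1,\dots,n\}\to N$ along which arrival times are nondecreasing; randomisation only permutes vessels with equal arrival times. I will state this characterisation first, since it is what ``consistent with the strategy profile'' means in the model.

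Now fix $g, g' \in G(s)$ and a position $k$ with $g(k) \neq g'(k)$. Suppose, for contradiction, that $s_{g(k)} \neq s_{g'(k)}$. Without loss of generality assume $s_{g(k)} < s_{g'(k)}$, and write $a := s_{g(k)}$. I will count the vessels with arrival time strictly below $a$, and those with arrival time at most $a$. Let $L := \lvert\{j : s_j < a\}\rvert$ and $M := \lvert\{j : s_j \le a\}\rvert$.

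Because $g$ is monotone, all $L$ vessels with $s_j < a$ occupy positions $1,\dots,L$ in $g$. Every vessel with $s_j = a$ occupies a position in $L+1,\dots,M$, so $L+1 \le k \le M$. The same reasoning applies to $g'$: the positions $1,\dots,M$ in $g'$ are filled exactly by the vessels with $s_j \le a$. Since $k \le M$, this gives $s_{g'(k)} \le a$, which contradicts $s_{g'(k)} > a$. The symmetric case is identical, so $s_{g(k)} = s_{g'(k)}$.

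The only real obstacle is the first step: justifying that every $g \in G(s)$ is monotone in arrival times. I would argue it directly from FCFS. A vessel that arrives strictly earlier joins the queue strictly earlier, and the queue discipline never lets a later arrival be served before it. Randomness enters only among vessels with identical $s_j$. Once this is settled, the rest is the counting argument above.
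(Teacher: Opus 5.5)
Your proof is correct and follows essentially the same route as the paper's: contradiction from $s_{g(k)}<s_{g'(k)}$, using that every $g\in G(s)$ is nondecreasing in arrival times, which the paper likewise takes as the defining property of $G(s)$. The only difference is the counting step. The paper applies pigeonhole to find $j\le k$ with $g'^{-1}(g(j))\ge k$, which gives $s_{g(j)}\le s_{g(k)}<s_{g'(k)}\le s_{g(j)}$. You instead show that the $M$ vessels with arrival time at most $a$ fill positions $1,\dots,M$ in every admissible order; this is an equivalent formulation of the same fact.
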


\begin{proof}
    We argue by contradiction. 
    Assume $s_{g(k)}<s_{g'(k)}$ without loss of generality.
    Since $g(k)\neq g'(k)$, there exists $j\in\{1,\dots,k\}$ such that $g'^{-1}\!\bigl(g(j)\bigr)\ge k$.
    By definition, for any $g\in G(s)$, $a\le b \Rightarrow s_{g(a)}\le s_{g(b)}$.
    Applying this to $g$ and $g'$ yields
    \begin{itemize}
        \item $s_{g(j)}\le s_{g(k)}$ because $j\le k$,
        \item $s_{g'(k)}\le s_{g'(g'^{-1}(g(j)))}=s_{g(j)}$ because $g'^{-1}(g(j))\ge k$.
    \end{itemize}
    Thus we obtain $s_{g(j)} \le s_{g(k)} < s_{g'(k)} \le s_{g(j)}$, which is a contradiction.
\end{proof}

This and Lemma~\ref{lem:no-takeover} allow us to calculate equilibria by focusing on a service order $g^*$ which corresponds to players' type order: $\forall i\in N, g^*(i) = i$.

While we assumed no initial queue in Section~\ref{sec:model}, the presence of an initial backlog can be rigorously modeled by introducing a virtual player $0$ whose service completion time represents the time the port becomes available.
Letting $t_0$ denote this initial boundary (where $t_0$ represents the backlog clearance time, or $t_0 = t_1$ implies no backlog), we can recursively define the equilibrium strategy sets.
Finally, we obtain the main result in complete information situation.

\begin{thm}\label{thm:nash_equilibria}
    Let $t_0$ be the service completion time of the virtual player $0$ (or $t_0 = t_1$ in the absence of a backlog), and $t_{n+1} = \infty$.
    We can recursively define
    \begin{equation}
        \Theta_i =
        \begin{cases}
            \{t_i\} \cup ([t_i, t_{i+1})\cap [-\infty, \theta_i])\\
            \quad \text{ where }\theta_i = \max\limits_{1\le k\le i-1}\{s_k^* + (i-k)\gamma\} &(\text{if $t_{i-1} < t_i < t_{i+1}$}),\\
            \{t_i\} \quad &(\text{Otherwise}).
        \end{cases}
    \end{equation}
    Strategy profile $(s_1^*,\dots,s_n^*)$ is a pure-strategy Nash equilibrium if and only if $\forall i\in N, s_i^*\in \Theta_i$.
\end{thm}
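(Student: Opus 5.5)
We prove the two directions separately. Throughout, Lemma~\ref{lem:no-takeover} together with Lemma~\ref{lem:same_arrival_time} lets us fix the service order $g^*$ with $g^*(i)=i$. Ties can only occur among players who all play their lower bounds; when such players share a type, swapping them leaves every completion time unchanged. With $g^*$ fixed, the completion time of $i$ is
\[
C_i=\max\Bigl\{s_i,\ \max_{1\le k\le i-1}\{s_k+(i-k)\gamma\}\Bigr\}+\gamma=\max\{s_i,\theta_i\}+\gamma,
\]
by \eqref{eq:waiting_time}. The virtual player $0$ enters as the term $k=0$, and $\theta_i$ depends only on $s_1^*,\dots,s_{i-1}^*$, so the recursion is well defined.

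\emph{Necessity.} Let $s^*$ be a Nash equilibrium. By Lemma~\ref{lem:no-takeover}, $s_i^*\in\{t_i\}\cup[t_i,t_{i+1})$.

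If $t_i=t_{i-1}$ or $t_i=t_{i+1}$, we claim $s_i^*=t_i$. Suppose instead $s_i^*>t_i$ and a neighbour $j$ shares the type.
\begin{itemize}
\item If $j=i+1$, then $S_{i+1}=S_i$, so $[t_i,t_{i+1})$ is empty and Lemma~\ref{lem:no-takeover} already forces $s_i^*=t_i$.
\item If $j=i-1$ and $s_{i-1}^*\ge s_i^*$, player $i-1$ is overtaken and could improve her order; this contradicts Lemma~\ref{lem:faster_better}.
\item If $j=i-1$ and $s_{i-1}^*<s_i^*$, player $i$ could deviate to $t_i=t_{i-1}\le s_{i-1}^*$. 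Doing so gains expected order, at least through the random tie-break, and Lemma~\ref{lem:faster_better} makes this strictly profitable.
\end{itemize}
The ordering conventions need care in this case.

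In the generic case $t_{i-1}<t_i<t_{i+1}$, suppose $s_i^*>\theta_i$. Then $i$ faces no wait, and $C_i=s_i^*+\gamma$. Deviating to $\max\{t_i,\theta_i\}<s_i^*$ keeps her position: she still does not overtake $i-1$ in a way that changes her order, since $\theta_i\ge s_{i-1}^*+\gamma>s_{i-1}^*$, and she arrives no earlier than required among later players. This strictly lowers $C_i$, a contradiction. Hence $s_i^*\le\theta_i$ or $s_i^*=t_i$, i.e.\ $s_i^*\in\Theta_i$.

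\emph{Sufficiency.} Take $s^*$ with $s_i^*\in\Theta_i$ for all $i$. One checks by induction that $s_1^*\le\dots\le s_n^*$ and that the realised order is $g^*$.

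Under this profile, each $i$ attains $C_i=\max\{t_i,\theta_i\}+\gamma$, because $s_i^*\le\max\{t_i,\theta_i\}$. Any deviation $s_i'$ falls into one of three cases:
\begin{itemize}
\item It keeps order position $i$. Then $C_i'=\max\{s_i',\theta_i\}+\gamma\ge C_i$, since $s_i'\ge t_i$ and $\theta_i$ is unchanged.
\item It moves later. Then $i$ queues behind additional players, and $C_i'\ge C_i$.
\item It moves earlier, overtaking some $j<i$. This requires $s_i'\le s_j^*$, and $s_j^*<t_{j+1}\le t_i$ when $t_j<t_{j+1}$. This is infeasible, except in ties where $s_j^*=t_j=t_i$, and those are excluded because equal types are pinned to $t_i$ and the swap leaves completion times unchanged.
\end{itemize}
So no deviation is profitable, and $s^*$ is an equilibrium.

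\textbf{Main obstacle.} The delicate step is the bookkeeping of ties and equal types, in both directions. There, expected orders and expected waiting under random tie-breaking must be compared carefully. I would handle it by first establishing the general fact: whenever $s_i\le\theta_i$, player $i$'s completion time equals $\theta_i+\gamma$ regardless of her arrival time. I would then treat ties via Lemma~\ref{lem:same_arrival_time}.
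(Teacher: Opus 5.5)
Your route matches the paper's: restrict with Lemma~\ref{lem:no-takeover}, pin co-typed players to $t_i$ via Lemma~\ref{lem:faster_better}, use $s_i+w_i=\max\{s_i,\theta_i\}$ for a player with $t_{i-1}<t_i<t_{i+1}$, and split deviations by whether position $i$ is kept. The parts about such generic players are fine. The gap is in the handling of ties. You flag it but do not close it, and the justification you rely on is false. Swapping two players who both arrive at $t_i$ does not leave every completion time unchanged; it exchanges their own completion times. What Lemma~\ref{lem:same_arrival_time} actually gives is that the position-indexed sequence $k\mapsto s_{g(k)}$ is the same for all $g\in G(s)$. Hence the waiting time of any player whose position is deterministic does not depend on the tie-break. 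That is exactly what the generic case needs, and it is how the paper uses the lemma. It says nothing about the tied players' own payoffs.

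This matters in the sufficiency direction. A tied player's payoff is governed by her expected completion time, which is the average over the tied positions. It is not the $g^*$-value $C_i$ that your bullets compare against. So ``those are excluded'' is not an argument, and Lemma~\ref{lem:same_arrival_time} cannot supply one. The missing step is the paper's. If $t_i=t_j$ for some $j\neq i$, then $s_i^*=t_i$, so her only deviations are $s_i'>t_i$. After such a deviation she is served behind all her co-typed players under every tie-break. Her position is therefore at least $\max_{g\in G(s^*)}g^{-1}(i)$, which strictly exceeds her expected position at $s^*$. Applying Lemma~\ref{lem:faster_better} to the move from $s_i'$ back to $t_i$ then shows the deviation is strictly unprofitable. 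The same lemma is also the clean way to handle your ``moves later'' bullet, since a later position may involve a tie with $i+1$. Separately, for $i=1$ with no backlog, the co-typed ``neighbour'' is the virtual player $0$, who cannot deviate, so your tie sub-cases do not apply. Argue directly, as the paper does, that player $1$ is served first without waiting, hence $s_1^*=t_1$.
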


$\theta_i$ represents when player $i-1$'s service will terminate in Nash equilibria.
If there exists a time interval between before $i-1$'s service terminate and player $i+1$ can invade, any time point in this interval is best response for player $i$ because his service time will be $\theta_i$ and no other player can disturb it.
The decent proof is given in \ref{app4}.

\subsection{Incomplete Information Game}\label{subsec:incomplete-info-game}
Considering real-world situations such as vessel operations, the assumption that all players possess common knowledge of each other's earliest arrival times $t_i$ is exceedingly strong.
Rather, $t_i \in \mathbb{R}_{\ge 0} \equiv T$ is private information and each player makes decisions based on beliefs regarding other players' types $t_{-i}$.
We formulate this situation as a Bayesian game.

Let $T^n$ be endowed with the Borel $\sigma$-algebra $\mathcal{B}$ and let $\lambda$ denote the Lebesgue measure on $(T^n,\mathcal{B})$.
A common prior is a probability measure $P$ on $(T^n,\mathcal{B})$ that is absolutely continuous with respect to $\lambda$ and admits a density $p : T^n \to [0,\infty)$, that is,
\begin{equation*}
    P(B) = \int_B p(t)\,dt \quad \text{for all } B \in \mathcal{B},
\end{equation*}
where $p(t) > 0$ for $\lambda$-almost all $t \in T^n$.
With a slight abuse of notation, we write $p(t_i,t_{-i})$ for the density evaluated at the type profile $(t_i,t_{-i})$, and $dt$ denotes integration with respect to the Lebesgue measure on $T^n$.

Bayesian Nash equilibrium is defined as a strategy profile $(\pi_i^*)_{i\in N}$ which satisfies the following condition for all players $i \in N$ and for all strategies $\pi_i: T \to S_i$:
\begin{equation}\label{eq:def-bne}
    \int_{T^n} u_i\bigl(\pi_i^*(t_i), \pi_{-i}^*(t_{-i})\bigr)\,p(t_i, t_{-i})\,dt
    \;\ge\;
    \int_{T^n} u_i\bigl(\pi_i(t_i), \pi_{-i}^*(t_{-i})\bigr)\,p(t_i, t_{-i})\,dt.
\end{equation}

\begin{thm}\label{thm:bayesian_nash_equilibrium}
    $\forall i\in N,\; \pi_i^*: t_i\mapsto t_i$ is the unique symmetric Bayesian Nash equilibrium.
\end{thm}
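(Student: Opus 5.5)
We must prove two things: that the profile $\pi_i^*(t_i)=t_i$ for all $i$ is a Bayesian Nash equilibrium, and that no other symmetric profile is. Symmetric means every player uses the same measurable map $\pi:T\to\mathbb{R}$ with $\pi(t)\ge t$. Throughout we use Lemma~\ref{lem:faster_better}: for fixed opponents' arrivals, a weakly better expected service order yields a weakly earlier expected completion time, and a strictly better order yields a strictly earlier one. We also use the ordinal payoff assumptions, which let us compare realized (expected) completion times pointwise in $t_{-i}$ before integrating against $p$.

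\textbf{Step 1: existence.} Suppose all $j\neq i$ play $\pi_j^*(t_j)=t_j$, and fix $t_i$ and any deviation $s_i> t_i$. For each realization $t_{-i}$, the arrival $t_i$ weakly improves $i$'s service order relative to $s_i$, since every opponent arriving before $s_i$ but after $t_i$ is overtaken. It also weakly reduces the completion time: we check this via \eqref{eq:waiting_time}, where $s_i+w_i^g$ is a max of terms $s_{g(k)}+(g^{-1}(i)-k)\gamma$ together with the $k=g^{-1}(i)$ term $s_i$. Lowering $s_i$ weakly lowers every term and weakly lowers $g^{-1}(i)$. Hence, pointwise, $u_i(t_i,t_{-i})\ge u_i(s_i,t_{-i})$.

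For strictness, consider the set of $t_{-i}$ for which some $t_j\in(t_i,s_i)$ with all other opponents' types above $s_i$. On this set $i$ strictly improves her order, so Lemma~\ref{lem:faster_better} gives a strict gain. Since $p>0$ a.e., this set has positive measure, so integrating gives the inequality in \eqref{eq:def-bne}. A strict gain is not even needed for the equilibrium claim, but we will reuse it in Step 2.

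\textbf{Step 2: uniqueness.} Let $\pi$ be symmetric with $\pi(t)>t$ on a set $A$ of positive measure. We exhibit a profitable deviation for some type in $A$, namely $s_i=t_i$, or an arrival just below $\pi(t_i)$.

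Fix $t_i\in A$ and let $\pi(t_i)=s>t_i$. Consider opponents' realizations in which some $j$ has $\pi(t_j)\in(t_i,s]$. The key case is when a positive-measure set of types $t'$, possibly $t_i$'s own neighbours, satisfy $\pi(t')\in[t_i,s]$. Because types are continuous and $p>0$ a.e., an opponent with such a type occurs with positive probability. Deviating to $t_i$ then strictly improves the expected order on that event, and weakly improves it everywhere by the monotonicity argument of Step 1. Therefore the deviation is strictly profitable.

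If instead $\pi$ places essentially no mass of arrivals in $[t_i,\pi(t_i)]$, we pick $t'\in A$ close to $t_i$. Either $\pi(t')$ lands in the interval for $t_i$, or we argue with the roles reversed. The intended conclusion is that there must exist types $t<t'$ in $A$ with $t'\le \pi(t)$ and overlapping arrival ranges.

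\textbf{Main obstacle.} The delicate step is making this overlap argument rigorous for an arbitrary measurable $\pi$. The case to rule out is one where arrivals are pushed later but only into regions no other type ever uses, such as a $\pi$ that maps types into disjoint windows.

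To handle it, I would first take the essential infimum $\underline t$ of the types in $A$ near a density point of $A$. I would then show that the set of types $t'$ with $t'<\pi(t)$ and $t'$ arbitrarily close to $t$ has positive measure. These opponents, arriving at $\pi(t')$, which is at least $t'$, could sometimes land before $\pi(t)$. If $\pi(t')\ge\pi(t)$ for all of them, then type $t$ still gains, because arriving at $t$ beats any opponent of type in $(t,\pi(t))$ with probability one, whereas arriving at $\pi(t)$ ties or loses to those whose $\pi$ equals $\pi(t)$.

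Failing a clean argument, I would rely on the strict version of Lemma~\ref{lem:faster_better} and show that the expected service order, viewed as a function of one's own arrival, is strictly increasing on $(t,\pi(t))$ whenever opponents' arrival distribution puts positive mass there. I would then derive a contradiction from the opposite case by comparing type $t$ with type $\pi(t)-\varepsilon$.
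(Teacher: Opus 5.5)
Your Step 1 is fine and is essentially the paper's existence argument: pointwise weak improvement from Lemma~\ref{lem:faster_better} and the max formula, then integrate.

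\textbf{The gap is in Step 2.} You only establish a strictly profitable deviation when opponents' \emph{arrivals} put positive mass on $[t_i,\pi(t_i)]$. The complementary case is left as an ``intended conclusion'', and the sketch you give for it does not work as stated. If $\pi(t')>\pi(t)$ for the relevant $t'\in(t,\pi(t))$, then arriving at $\pi(t)$ already beats those opponents, so type $t$ gains nothing from them. The tie channel helps only if $\pi$ has atoms. Moreover, exhibiting one pair of types with overlapping windows describes a null event, not one of positive prior probability. So uniqueness is not yet proved.

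Your ``roles reversed'' remark is the right idea, and it closes the gap once you run it in measure. Let $Q=\{(t,t'):t\in A,\ t<t'<\pi(t)\}$, so $\lambda^2(Q)=\int_A(\pi(t)-t)\,dt>0$. For $(t,t')\in Q$ there are two cases.
\begin{itemize}
\item If $\pi(t')\le\pi(t)$, then $\pi(t')\in(t,\pi(t)]$. A player of type $t$ strictly improves her expected position against an opponent of type $t'$ by arriving at $t$.
\item If $\pi(t')>\pi(t)>t'$, then $t'\in A$ and $\pi(t)\in(t',\pi(t'))$. A player of type $t'$ strictly gains against an opponent of type $t$.
\end{itemize}
Thus $R=\{(s,r):s\in A,\ s\le\pi(r)\le\pi(s)\}$ and its mirror image cover $Q$, so $\lambda^2(R)>0$. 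By Fubini, $p>0$ a.e., and the weak improvement from Step 1, switching to $t_i$ on the positive-measure set $\{s\in A:\lambda(R_s)>0\}$ is a strictly profitable deviation.

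An even shorter route avoids overtaking altogether. For $t_i\in A$, the event that every opponent's type exceeds $\pi(t_i)$ has positive probability under the full-support prior. On that event $i$ arrives first to an idle port, so moving from $\pi(t_i)$ to $t_i$ strictly lowers her completion time.

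The paper's own footnote does not settle this point either. Its set $E$ asks for an opponent whose \emph{type} lies in $(t_i,t_i+1/k)$. Overtaking, however, requires the opponent's \emph{arrival} $\pi^s(t_j)$ to lie in $[t_i,\pi^s(t_i)]$. This fails for every such opponent when, e.g., $\pi^s(t)=t+c$ with $c\ge 1/k$. So you located the real difficulty correctly; what is missing is one of the two arguments above.
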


From Theorem~\ref{thm:nash_equilibria}, for an arbitrary type profile $t \in T^n$, going as fast as possible is included in rational behaviour.
If a player deviates from it and chooses his ETA as $s_i > t_i$, there is a positive probability that some other player $j$ with $s_i > s_j = t_j > t_i$ exists.
In this case, from Lemma~\ref{lem:faster_better}, player $i$ could have improved his utility by going as fast as possible.
To avoid this kind of risk, going as fast as possible arises as a rational behaviour even in the incomplete information situation.
Regarding uniqueness, we focus on symmetric Bayesian Nash equilibria and show that there is no symmetric Bayesian Nash equilibrium other than $(\pi_i^*)_{i \in N}$.
The detailed explanation appears in \ref{app5}.

Theorem~\ref{thm:bayesian_nash_equilibrium} suggests that in the real world where players cannot perceive other players' types when they make decisions, SFTW is the only rational outcome among symmetric behaviours.
Thus, we have succeeded in game-theoretically explaining the SFTW phenomenon as the unique symmetric Bayesian equilibrium outcome.

\subsection{Illustrative Example}
To visualise the theoretical findings derived in this section, consider a scenario with four vessels ($N=\{A, B, C, D\}$) ordered by their types $t_A < t_B < t_C < t_D$ without any queue.

First, we consider the case of incomplete information.
Consistent with Theorem~\ref{thm:bayesian_nash_equilibrium}, rational players cannot coordinate their arrivals due to the lack of information regarding others' capabilities.
Consequently, every vessel chooses to arrive at its earliest feasible time ($s_i = t_i$).
As illustrated in Figure~\ref{fig:eq_incomplete}, this strategy profile results in immediate queue formation.
For example, Vessel $B$ arrives at $t_B$ while Vessel $A$ is still being serviced, necessitating anchorage waiting until the berth becomes free.
Similarly, Vessel $C$ incurs waiting time.
This outcome reflects the SFTW phenomenon where players rush to secure a position in the FCFS environment.

\begin{figure}[htbp]
    \centering
    \includegraphics[width=\linewidth]{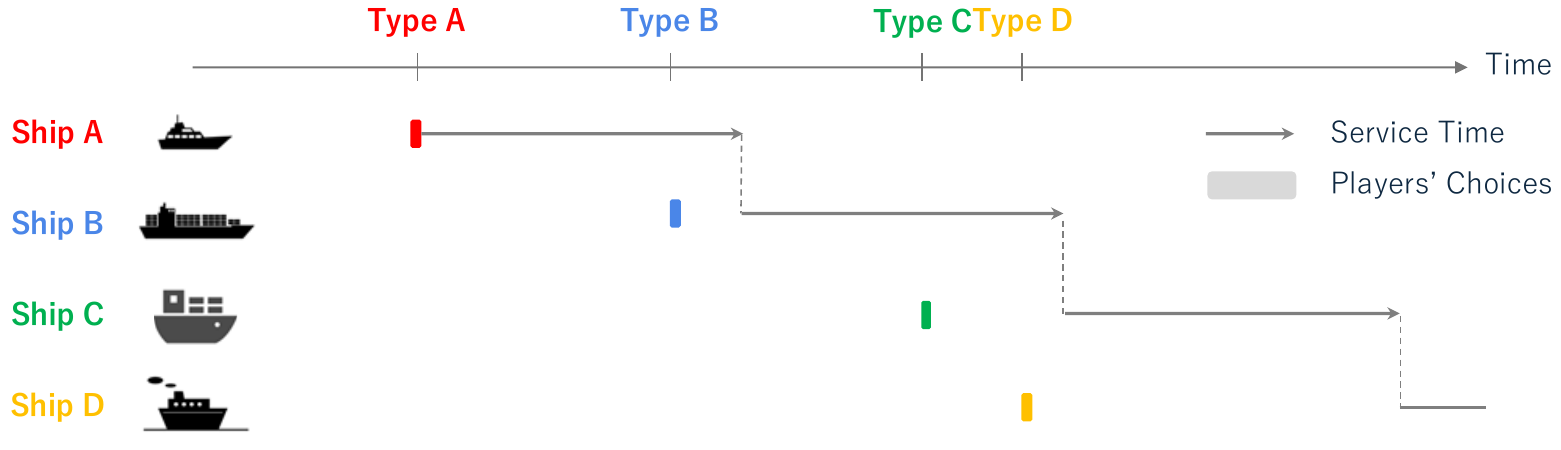}
    \caption{Equilibrium Arrival Pattern under Incomplete Information}
    \label{fig:eq_incomplete}
\end{figure}

Next, we examine the case of complete information.
As derived in Theorem~\ref{thm:nash_equilibria}, the shared knowledge of types expands the set of equilibria for each vessel.
Figure~\ref{fig:eq_complete} visualises these expanded sets: the colored bars represent the entire range of equilibrium arrival times $\Theta_i$ for each vessel.
For instance, Vessel $B$ represents a scenario where the equilibrium strategy is not a single point but a continuous interval $[t_B, \theta_B]$.
Any arrival time within this width maintains the same service order and completion time, making it a valid best response.
Similarly, Vessel $C$'s equilibrium set spans the interval $[t_C, t_D]$.
This indicates that under complete information, players gain significant flexibility: they are no longer forced to rush to $t_i$ but can rationally choose any timing within these expanded intervals without suffering utility loss.

\begin{figure}[htbp]
    \centering
    \includegraphics[width=\linewidth]{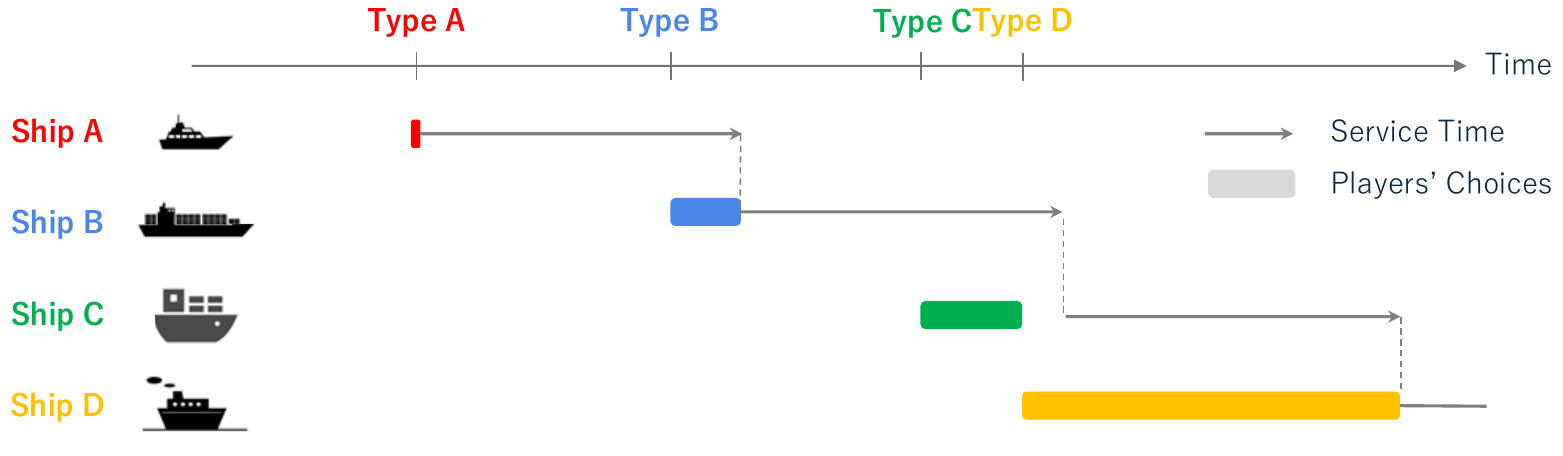}
    \caption{Equilibrium Arrival Pattern under Complete Information}
    \label{fig:eq_complete}
\end{figure}

\section{Evaluation of Information Sharing in the Real World Shipping}\label{sec:empirical}
\subsection{Equilibrium Selection in Complete Information Environments}\label{subsec:eq_selection}
As demonstrated in Theorem~\ref{thm:bayesian_nash_equilibrium}, under incomplete information, SFTW emerges as the unique symmetric Bayesian Nash equilibrium.
Conversely, Theorem~\ref{thm:nash_equilibria} shows that sharing private information $t_1, \dots, t_n$ expands the set of rational behaviours.
Specifically, for any player $i$, there exists a range of arrival times $\Theta_i$ that constitute best responses.
Importantly, within our model, any strategy $s_i \in \Theta_i$ results in the exact same service completion time.
This implies that players are strictly 
indifferent regarding their utility $u_i$ across all strategies in $\Theta_i$.

To predict which equilibrium will be realized from this indifferent set, we invoke Schelling's concept of focal points~\cite{Schelling_1980}.
In situations where multiple equilibria yield identical payoffs, players coordinate their expectations based on ``salience,'' a feature that stands out due to shared cultural or institutional context.
We argue that the strategy corresponding to the latest possible arrival time, $(\sup\Theta_i)_{i\in N}$, serves as this focal point, identified as ``Green Navigation.''

In the contemporary shipping industry, environmental stewardship has transcended abstract ethics to become a dominant institutional norm~\citep{Psaraftis_2019}.
Reinforced by global frameworks such as the IMO's GHG strategy and increasing pressure for supply chain decarbonization, the principle of avoiding emissions provides a distinct and universally intelligible signal for coordination~\cite{Sugden_2006, Suchanek_2018}.
Because the strategy $\sup\Theta_i$ represents the most environmentally friendly action within the feasible set, it is the uniquely salient candidate that actors can reasonably expect others to converge upon.

Therefore, we conclude that the realized strategy profile under information sharing is $(\sup\Theta_i)_{i\in N}$.
This selection is driven not by a change in the payoff structure, but by the coordinating power of established industry norms.
In other words, information sharing enables ships to synchronize their actions with sustainability goals without sacrificing their primary utility.

Having established that the strategy profile $(\sup\Theta_i)_{i\in N}$ serves as the focal equilibrium for Green Navigation, we now quantify the operational flexibility it provides.
We define ``slack'' as the time difference between the arrival time in the SFTW equilibrium $t_i$ and the Green Navigation equilibrium $\sup\Theta_i$.

Figure~\ref{fig:slack_definition} illustrates this concept.
The white bar represents the time duration that a vessel can rationally delay its arrival to sail slower by utilising information about other vessels' constraints.
This interval corresponds to the period during which the berth is occupied by preceding vessels despite vessel $i$ being physically capable of arriving.
Crucially, utilizing this slack for slow steaming does not delay the service start time, allowing for fuel savings without compromising logistical efficiency.

\begin{figure}[htbp]
    \centering
    \includegraphics[width=0.9\linewidth]{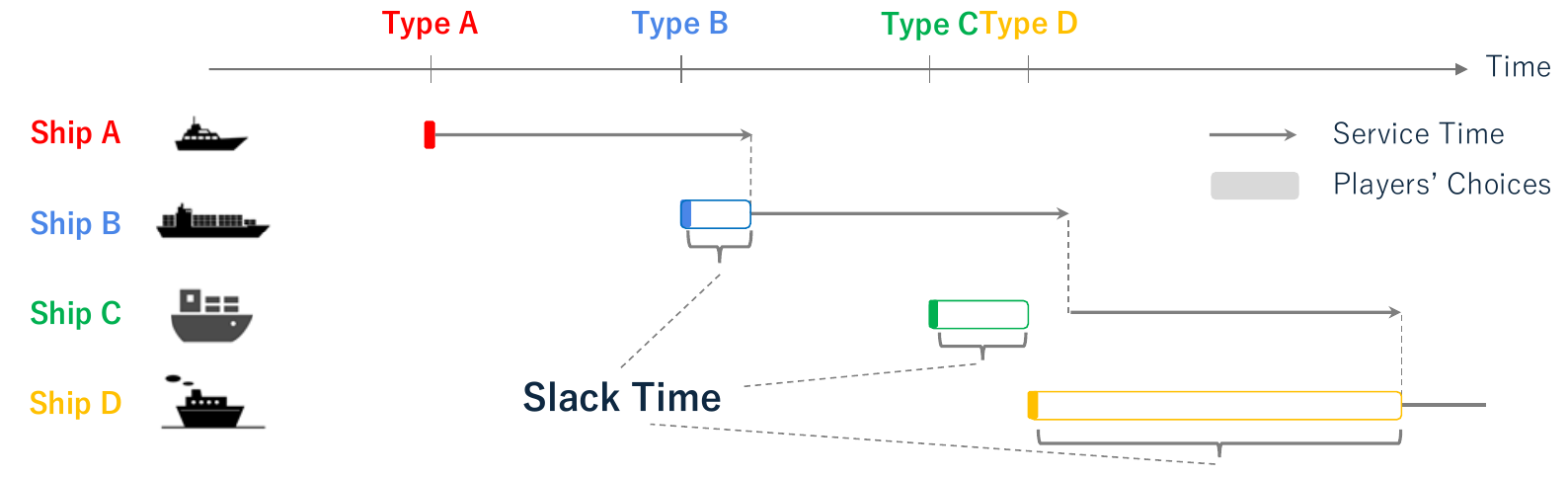}
    \caption{Visualisation of Slack: Potential for Slow Steaming}
    \label{fig:slack_definition}
\end{figure}

\subsection{AIS Data Processing and Parametrization}
To quantitatively evaluate our theoretical findings, we construct an AIS-based dataset of calls to Port Hedland (PHE) in Australia.
Port Hedland is one of the world's largest bulk export ports and is suitable for our case study because it operates under a protocol that necessitates anchoring for queue management, resembling the FCFS competition modeled in our game.

As a sampling frame, we first identify all vessels that exhibit at least one prolonged stop in the vicinity of PHE during the period from 1 January 2022 to 31 December 2024.
Specifically, using raw AIS data sampled at 3-hour intervals, we detect vessels that remain within a coarse circle of radius $0.5^{\circ}$ around the nominal port centre with a speed not exceeding $v_{\text{stop}} = 2$ knots for at least $6$ consecutive hours.
We treat these vessels as having called at PHE at least once in the sample period and use this criterion solely to define the relevant population.
We then restrict attention to ships whose static AIS information classifies them as dry bulk carriers, ensuring that the empirical analysis focuses on iron ore trades operated by vessels of similar size and handling characteristics.
For each vessel, we assemble the AIS time series, sort messages by timestamp, remove invalid records, and truncate implausible speed values outside the range of $0$ to $40$ knots.
The resulting dataset includes $9,539$ voyage records identified as calls to PHE.

For the queuing analysis, we adopt an operational definition of arrival corresponding to the moment a vessel joins the congestion system.
Using the haversine formula with the port centre at latitude $-20.31$ and longitude $118.57$, we compute the distance to PHE for every record.
We define a binary indicator $I_{\text{port}}(t)$ that equals one when the vessel is within $R_{\text{port}} = 60$ km of the port centre. 
Port entry events are identified as the first timestamps at which $I_{\text{port}}(t)$ switches from zero to one.
We denote this time by $t^{\text{entry}}$; in our model, $t^{\text{entry}}$ represents the arrival time at the single-server system.
For each entry, the departure time from the previous port, denoted by $\tau$, is inferred using a hybrid rule.
Looking backwards from $t^{\text{entry}}$, we identify the final stop at the previous port as the last continuous episode outside the PHE geofence where the speed $v(t)$ remains below $v_{\text{stop}}$ knot for at least $6$ hours.
From the end of this stop, we search forward for the first instance where speed exceeds an acceleration threshold $v_{\text{go}} = 8$ knots, setting its timestamp as $\tau$.
In the absence of a clear acceleration event, the end of the stop is used as $\tau$.
The inbound voyage is then defined as the segment of AIS records between $\tau$ and $t^{\text{entry}}$, allowing us to compute the sailing duration and distance.

Within PHE, we identify berthing events to measure service times.
We define a berthing indicator $I_{\text{berth}}(t)$ that equals one when the vessel is within $R_{\text{berth}} = 3$ km of the port centre with speed below $v_{\text{stop}}$ knot.
We detect contiguous intervals where $I_{\text{berth}}(t)=1$ for at least $6$ hours, treating each block as a continuous berth stay.
Applying filters to remove incomplete or unrealistic records (duration outside $3$ to $500$ hours), we identify the earliest valid berthing block for each voyage.
The cargo handling time, $T^{\text{service}}$, is defined as the duration of this block.
The distribution of $T^{\text{service}}$ has a median of $36.00$ hours, and a $10$ percent trimmed mean yields a robust estimate of approximately $35.94$ hours.

A critical challenge in applying our single-server model to a real-world port with multiple berths is the definition of the service time parameter, $\gamma$.
In reality, the port processes multiple ships simultaneously.
However, from the perspective of a ship joining the queue, the essential variable is not the physical cargo handling time of a specific berth, but the rate at which the system clears the queue, i.e., the interval at which a berth becomes available.
Therefore, we approximate the port's capacity as a single aggregated server with an effective service time $\gamma$.
We compute the inter-arrival times between consecutive berth start events across the entire port, restricting the analysis to positive intervals larger than 0.1 hours.
The 10 percent trimmed mean of these intervals measures the typical time between successive vessel acceptances by the port.
In our data, this value is approximately $3.97$ hours.
We therefore adopt $\gamma \approx 3.97$ hours as the calibrated service time parameter for the single-server approximation in our theoretical analysis.
This approximation allows us to capture the macroscopic congestion dynamics and the strategic interactions among vessels without getting obscured by the complexities of specific berth allocations.

The combination of an average cargo handling time of about $36$ hours and a throughput based service time of about $3.97$ hours implies that, in steady state, the port operates as if roughly nine berths were continuously occupied by the dry bulk carriers in our sample.
This effective number of parallel servers is smaller than the total number of physical deep water iron ore berths in the inner harbour, which is on the order of nineteen, but this is precisely what one would expect in practice.
Not all iron ore berths are continuously assigned to large capesize vessels of the type we observe, some berths are intermittently used by smaller bulkers or other commodities, and some capacity is periodically unavailable due to tides, maintenance, or operational constraints.
The fact that the effective utilisation corresponds to somewhat more than half of the iron ore capable berths is therefore consistent with the actual multi berth operation at PHE rather than indicative of any inconsistency between the observed handling times and the inferred port level throughput.

% We analysed AIS data covering the period from January 1, 2022, to December 31, 2024.
% To ensure the validity of the single-server approximation with a constant service rate $\gamma$, we restricted our sample exclusively to dry bulk carriers arriving in ballast condition to load iron ore. Since these vessels share similar dimensions (mostly Capesize) and operational characteristics, the variance in their handling times is minimized, justifying the use of a representative aggregate service time.

% The dataset includes $9,539$ voyage records identified as calls to PHE.
% We defined the ``arrival at the port'' as the moment a vessel crosses a geofence with a radius of 60 km from the port centre.
% This geofence was determined by AIS data as speed over ground (SOG) drastically dropped down to around zero and vessels began anchoring in this vicinity.

% This approximation allows us to capture the macroscopic congestion dynamics and the strategic interactions among vessels without getting obscured by in the complexities of specific berth allocations.

\subsection{Observation of Excessive Queuing}
Before applying our counterfactual analysis, we examine the current state of congestion at PHE to see if it aligns with the characteristics of the SFTW equilibrium.
Theorem~\ref{thm:bayesian_nash_equilibrium} predicts that under incomplete information and FCFS competition environments, vessels maximize their utility by securing the earliest possible position in the queue, regardless of the queue length.
Consequently, we expect to observe substantial waiting times that cannot be explained solely by service time variability.

We calculated the comprehensive offshore waiting time for each vessel, defined as the duration from the arrival at the port area (60km radius) to the commencement of berthing. 
The data reveals that vessels experience a mean waiting time of $100.73$ hours and a median of $75.00$ hours.
As shown in Figure~\ref{fig:offshore_waiting}, the distribution of waiting times is heavily skewed with a long tail.

\begin{figure}
    \centering
    \includegraphics[width=\linewidth]{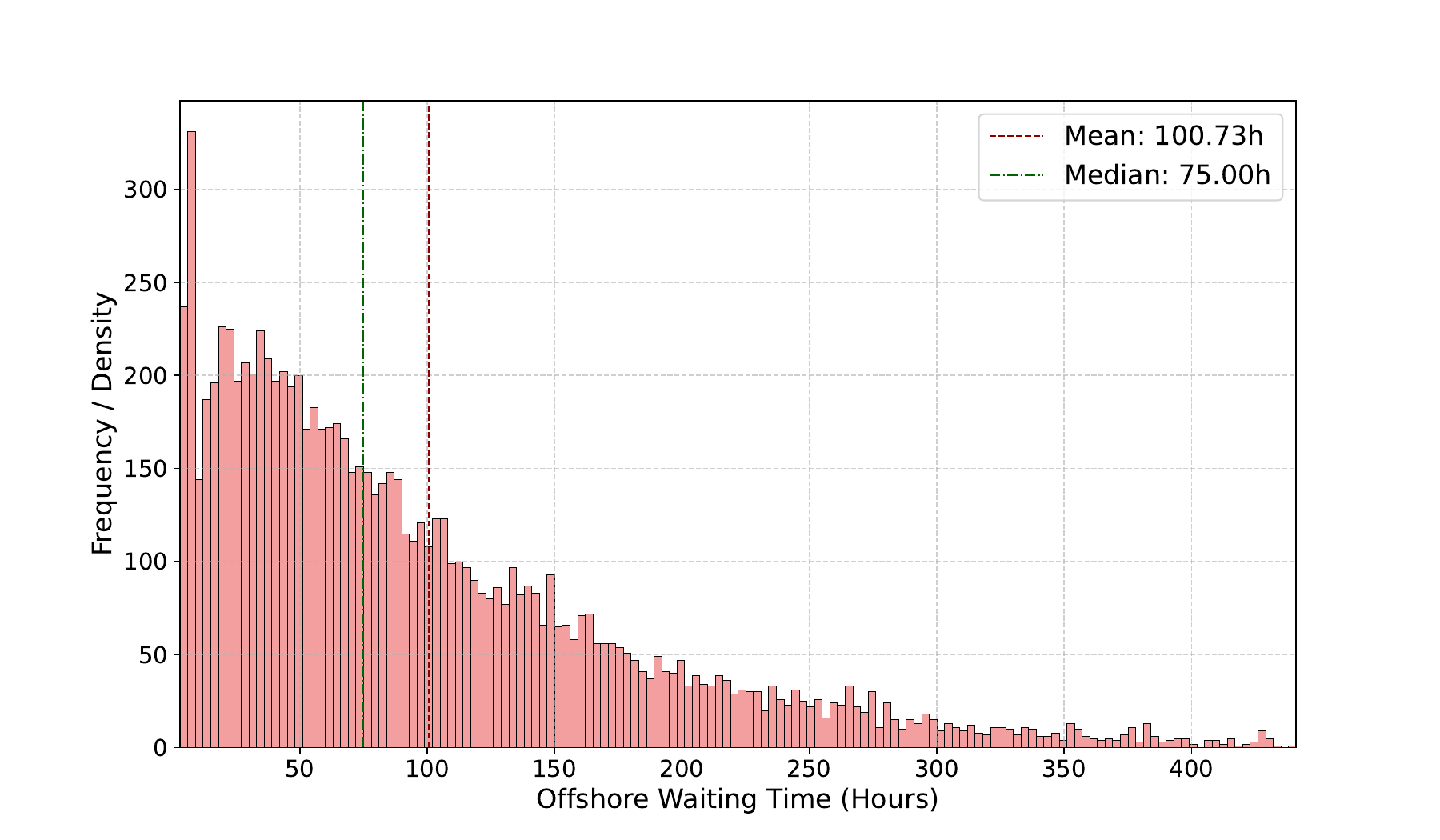}
    \caption{Distribution of Comprehensive Offshore Waiting Times. The heavy-tailed distribution with a mean of $100.73$ hours indicates excessive queuing consistent with SFTW behaviour.}
    \label{fig:offshore_waiting}
\end{figure}

Comparing the mean waiting time ($100.73$ hours) with the effective service time ($\gamma \approx 3.97$hours), the vessels spend, on average, waiting roughly $25$ times longer than the system's inter-service interval.
If vessels were coordinating their arrivals or operating under a mechanism that allowed them to delay arrival without penalty (as in the complete information equilibrium), such excessive queuing would be rationalised into slow steaming.
Despite the clear inefficiency, the persistence of these massive waiting times suggests that vessels are not adjusting their arrival times to match the port's availability.
Instead, they are engaging in a non-cooperative competition to arrive as early as possible to secure a spot in the FCFS queue.
This observation is consistent with the SFTW equilibrium predicted by our theoretical model.

\subsection{Quantifying the Value of Information}
Finally, we simulate the potential reduction in congestion if information were shared among vessels.
Based on Theorem~\ref{thm:bayesian_nash_equilibrium}, we interpret the observed actual arrival times as the vessels' types (physical earliest feasible arrival times, $t_i$), assuming that ships are currently playing the unique symmetric Bayesian Nash equilibrium where $s_i = t_i$.

We then calculate the theoretical ``slack'' $\delta_i$ for each vessel $i$ under the complete information equilibrium derived in Theorem~\ref{thm:nash_equilibria} and the discussion in Section~\ref{subsec:eq_selection}.
The slack represents the amount of time a ship could have delayed its arrival (i.e., sailed slower) without altering the service order or delaying its service completion time.
The slack is calculated as $\delta_i = \min(t_{i+1}, \theta_i) - t_i$, where $\theta_i$ is the service completion time of the preceding vessel, determined recursively as $\theta_i = \max(t_i, \theta_{i-1}) + \gamma$, with $\gamma = 3.97$ hours.

The simulation results indicate heterogeneous scope for ``Green Navigation'' through information sharing.
Specifically, the analysis yields a mean slack of $1.54$ hours and a median slack of $0.00$ hours per voyage.
Over the three-year period analysed, the total potential slack amounts to $14,721$ hours. 
Figure~\ref{fig:slack_hours} illustrates that most voyages have little or no slack, while a subset exhibits positive slack.

\begin{figure}
    \centering
    \includegraphics[width=\linewidth]{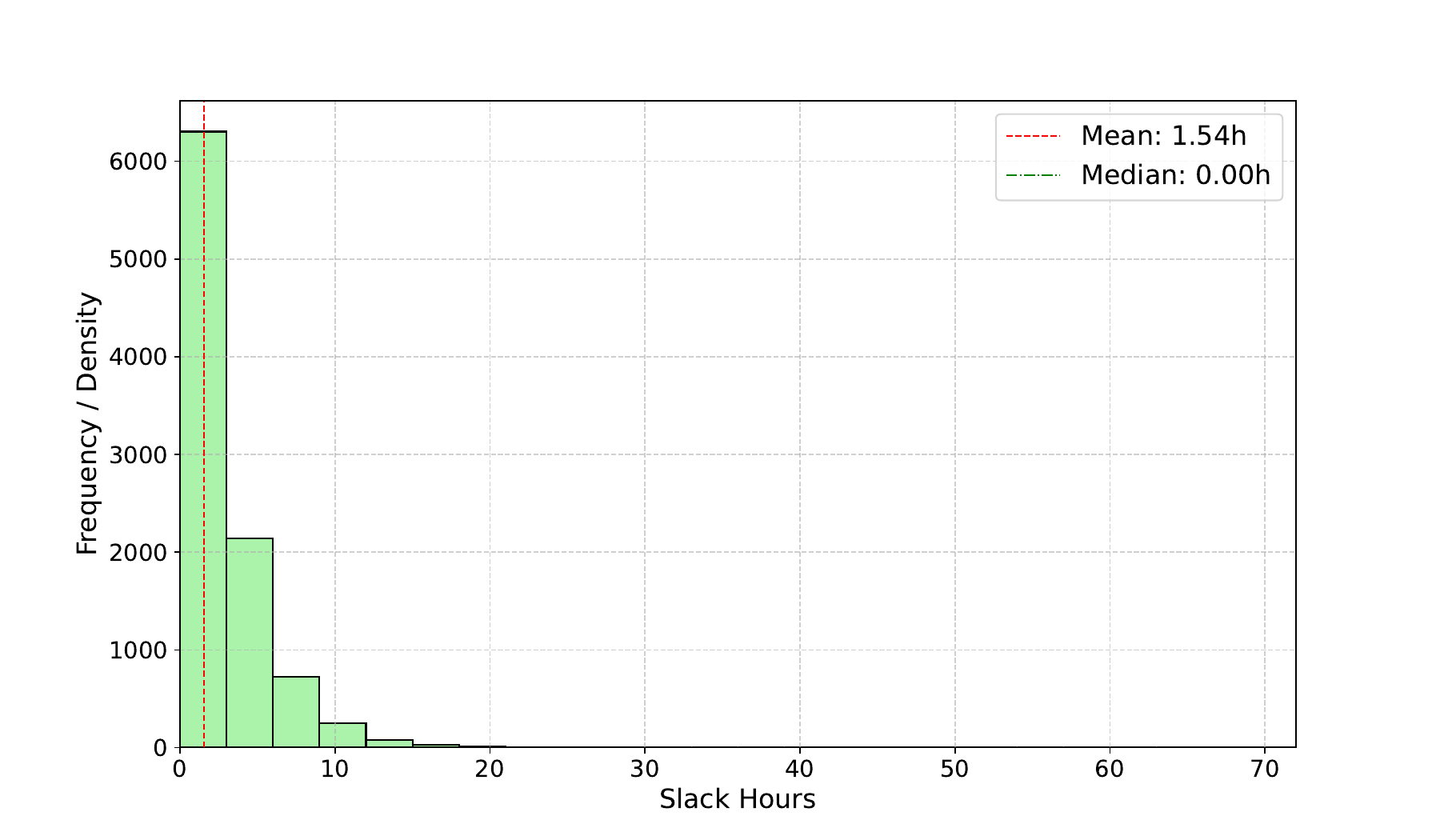}
    \caption{Distribution of Calculated Slack Hours (Potential Slow Steaming Duration)}
    \label{fig:slack_hours}
\end{figure}

The key implication of slack is structural rather than merely quantitative.
By construction, slack measures the time window over which a vessel can delay arrival while preserving the FCFS service order and, in the complete-information equilibrium, preserving its service completion time.
In this sense, slack operationalizes the coordination margin that remains compatible with the existing FCFS regime.
The empirical distribution in Figure~\ref{fig:slack_hours} shows that this margin is highly heterogeneous: it is absent for many voyages, but it emerges for a non-trivial subset.
This heterogeneity is informative in itself, because it indicates when and where information transparency can unlock utility-preserving speed reductions, and it provides an empirical benchmark for the prevalence of such opportunities in a highly congested bulk port.

Moreover, even moderate speed reductions can translate into meaningful fuel and emissions reductions due to the convex relationship between speed and propulsion fuel consumption, often approximated as cubic in speed~\cite{Psaraftis_2023}.
Accordingly, identifying the feasible arrival-delay window is a first step toward quantifying the decarbonisation potential of coordination under FCFS.
Our results therefore connect a strategic mechanism, information-induced equilibrium shifts, to an empirically measurable margin that can be mapped to operational adjustments.

% While an average slack of $1.54$ hours might appear modest compared to the 100-hour waiting time, it represents a pure efficiency gain.
% This time can be directly converted into slow steaming without any negative impact on logistics schedules.
% Considering the cubic relationship between speed and fuel consumption, even a few hours of deceleration from maximum speed can yield significant GHG reductions.
% Crucially, this result demonstrates that even in a highly congested port where the arrival rate closely matches the service rate, simply reducing information incompleteness allows for a utility-preserving adjustment of arrival times, mitigating the SFTW phenomenon.

\section{Discussion}\label{sec:discussion}
Our empirical analysis revealed a striking contrast: while vessels spend an average of over $100$ hours waiting, the theoretical ``slack'' available for slow steaming is only about $1.54$ hours, with a median of $0.00$ hours.

This counter-intuitive result that immense queues do not translate into immense flexibility can be explained by the high density of vessels' types (earliest feasible arrival times).
As derived in Theorem~\ref{thm:nash_equilibria}, the feasible slack for slow steaming is bounded not directly by the queue length, but by the interval between a vessel's own capability $t_i$ and that of the previous and next competitors $t_{i-1}, t_{i+1}$.
In the analyzed data, these arrival capabilities are extremely dense.
In such a dense environment, shifting an arrival time even slightly later could result in losing one's position in the service order, which vessels strictly avoid under the faster the better incentives.
This result can be explained by the extremely high traffic density at one of the most congested ports, PHE.

Consequently, our findings from PHE should be interpreted as a conservative benchmark for a highly congested bulk port, rather than a universal estimate of the coordination gains under FCFS.
In other ports characterized by moderate congestion, the gaps between service slots and between types may be larger.
In such environments, the value of information sharing could be more pronounced, as vessels may have wider windows to adjust their speed without compromising their service order.

Additionally, the small magnitude of the slack does not diminish its importance.
In maritime engineering, fuel consumption is commonly modelled as a convex and highly speed-sensitive function, traditionally close to the cube of speed~\cite{Psaraftis_2023}.
Consequently, speed reductions yield the highest marginal fuel savings when applied to the highest speed ranges.
The derived slack allows vessels to shave off the peak speeds during their voyage.
Even a reduction of a few hours in navigation time, when converting a full-speed sprint into a slightly more moderate pace, can result in meaningful reductions in fuel consumption and greenhouse gas emissions.
Crucially, this environmental benefit is achieved without any loss of utility for the ship operators, as the service completion time remains unchanged.

The findings suggest a clear pathway for policy intervention.
Currently, the industry often looks towards sophisticated ``Virtual Arrival'' contracts or slot auctioning systems to solve congestion~\cite{Senss_2023}.
While effective, these require complex institutional changes, such as contractual renegotiation, cross-stakeholder governance, and strong incentive alignment, which limits their practical adoption.
Our model demonstrates that simply enriching the information environment, specifically, making vessels' types visible to one another, can trigger a shift to a more sustainable equilibrium within the existing FCFS framework.
Technologies that share real-time capability data, such as the implementation of port call optimisation platforms or enhanced AIS protocols can serve as soft infrastructure that enables this coordination.

Several limitations should be noted.
First, we approximated the multi-berth port as a single-server system.
While this captures the macroscopic bottleneck, it simplifies specific berth-allocation dynamics.
Relatedly, our empirical construction uses port-level AIS sequences, which may pool heterogeneous operational patterns across terminals within the port, potentially affecting the measured density of types and hence the implied slack distribution.
Second, our calculation of slack is deterministic and does not account for weather uncertainties or mechanical failures that might affect voyage times.
Finally, while we quantified the time slack, we did not explicitly estimate the GHG reduction volume, as this depends heavily on individual vessel engine characteristics and weather conditions, which were outside the scope of our utility model.

It is worth noting that our theoretical model deliberately adopts a parsimonious utility benchmark in which each vessel's payoff depends only on service completion time.
This simplification is not meant to claim that sailing time and waiting time are economically equivalent in practice; rather, it isolates the strategic mechanism through which information transparency can relax the incentive to race under FCFS without changing the realized service order.
Incorporating additional real-world frictions, such as contractual terms, schedule-reliability concerns, or operating costs incurred while waiting, is an important direction for future work and may further refine how the measured slack translates into implementable speed reductions and emissions outcomes.

\section{Conclusion}\label{sec:conclusion}
This study addressed the persistent ``Sail Fast, Then Wait'' (SFTW) phenomenon in maritime logistics through a game-theoretic lens.
We theoretically demonstrated that SFTW emerges as the unique symmetric Bayesian Nash equilibrium under incomplete information in FCFS environments.
Conversely, we proved that sharing information regarding vessels' feasible arrival times expands the set of equilibria, allowing for a ``Green Navigation'' strategy where vessels voluntarily slow down without delaying their cargo service.

Empirically, using AIS data for vessels calling at PHE, we observed excessive waiting times (averaging over 100 hours) relative to the service interval (around 4 hours), a finding consistent with the non-cooperative SFTW equilibrium.
Our counterfactual simulation revealed that information sharing could uncover an average of $1.54$ hours of slack per voyage.
While seemingly small, this slack represents a ``free lunch''; an opportunity to reduce emissions through slow steaming without compromising the private incentives (service completion time) of the operators.
Furthermore, given that this study focused on a highly saturated port, future research extending this analysis to ports with moderate congestion levels is expected to reveal even larger potential for emissions reduction.

In conclusion, our work provides a theoretical foundation for the value of information in sustainable shipping.
It suggests that before (or alongside) reforming contractual clauses, the industry should prioritize the transparency of operational data.
By removing the uncertainty regarding competitors' capabilities, we can align individual rationality with collective sustainability goals.

% \bibliographystyle{unsrtnat}
% \bibliography{references}  %%% Uncomment this line and comment out the ``thebibliography'' section below to use the external .bib file (using bibtex) .

%%% Uncomment this section and comment out the \bibliography{references} line above to use inline references.

\begin{thebibliography}{00}

\bibitem{MacGregor_1983} MacGregor, D. R. (1983). The tea clippers: Their history and development 1833–1875. Conway Maritime Press.

\bibitem{Sung_2022} Sung, I., Zografakis, H., \& Nielsen, P. (2022). Multi-lateral ocean voyage optimization for cargo vessels as a decarbonization method. Transportation Research Part D: Transport and Environment, 110, 103407. https://doi.org/10.1016/j.trd.2022.103407

\bibitem{Alvarez_2010} Alvarez, J. F., Longva, T., \& Engebrethsen, E. S. (2010). A methodology to assess vessel berthing and speed optimization policies. Maritime economics \& logistics, 12(4), 327-346. https://doi.org/10.1057/mel.2010.11

\bibitem{Adland_2018} Adland, R. O., \& Jia, H. (2018). Contractual barriers and energy efficiency in the crude oil supply chain. In 2018 IEEE International Conference on Industrial Engineering and Engineering Management, 1-5. https://doi.org/10.1109/IEEM.2018.8607536

\bibitem{IMO_2023_GHG} International Maritime Organization. (2023). 2023 IMO strategy on reduction of GHG emissions from ships [Resolution MEPC.377(80)]. Retrieved from https://wwwcdn.imo.org/localresources/en/KnowledgeCentre/IndexofIMOResolutions/MEPCDocuments/MEPC.377(80).pdf. Accessed September 18, 2025

\bibitem{BIMCO_JIT_2021} BIMCO. (2021). Just in time arrival clause for voyage charter parties 2021. Retrieved from https://www.bimco.org/contractual-affairs/bimco-clauses/current-clauses/just-in-time-arrival-clause-for-voyage-charter-parties-2021/. Accessed September 18, 2025

\bibitem{BlueVisby_MEPC82} Blue Visby Consortium. (2024). GHG emissions reductions through mitigating the operational and commercial practice of Sail Fast Then Wait [Information paper MEPC 82/INF.32]. Retrieved from https://maritimecyprus.com/wp-content/uploads/2024/10/Blue-Visby-Solution\_c.pdf. Accessed September 18, 2025


\bibitem{Jimenez_2021} Jim\'{e}nez, P., Cano, J. C., Calafate, C. T., \& Manzoni, P. (2021). HADES: A multi-agent platform to reduce congestion anchoring at the Port of Cartagena. \textit{Applied Sciences}, 11(7), 3109. https://doi.org/10.3390/app11073109

\bibitem{Wang_2025} Wang, W., Wang, H., Pang, K. W., Zhen, L., \& Wang, S. (2025). Optimizing bunkering and sailing strategies for sustainable shipping: a decision model for reducing costs and carbon emissions. Annals of Operations Research, 1-19. https://doi.org/10.1007/s10479-025-06650-4

\bibitem{Zhen_2025} Zhen, L., Jiang, M., Wang, S., \& Wu, J. (2025). Ship fleet scheduling and deployment optimization considering carbon and sulfur emissions. European Journal of Operational Research. (In Press) https://doi.org/10.1016/j.ejor.2025.07.021

\bibitem{Jensen_2025} Omholt-Jensen, S., Fagerholt, K., \& Meisel, F. (2025). Fleet repositioning in the tramp ship routing and scheduling problem with bunker optimization: A matheuristic solution approach. European Journal of Operational Research, 321(1), 88-106. https://doi.org/10.1016/j.ejor.2024.09.029

% \bibitem{Bu_2023} Bu, F., Liu, J., Liao, H., \& Nachtmann, H. (2023). An alternative solution to congestion relief of U.S. seaports by container-on-barge: A simulation study. \textit{Simulation Modelling Practice and Theory}, 129, 102836. https://doi.org/10.1016/j.simpat.2023.102836


% \bibitem{ref4} Kadıo\v{g}lu, S. O., et al. (2022). Real-time ETA sharing and AI-assisted berth planning for port efficiency. \textit{Maritime Transport Research}, 3, 100052. https://doi.org/10.1016/j.martra.2022.100052

% \bibitem{vandersteeg_2023} van der Steeg, J. J., Benders, R., \& van der Meer, R. (2023). Berth planning and real-time disruption recovery: A simulation study for a tidal port. \textit{Flexible Services and Manufacturing Journal}, 35, 70--110. https://doi.org/10.1007/s10696-021-09420-w

% \bibitem{Jiang_2025} Jiang, S., Liu, L., Peng, P., Xu, M., \& Yan, R. (2025). Prediction of vessel arrival time to port: a review of current studies. Maritime Policy \& Management, 1–26. https://doi.org/10.1080/03088839.2025.2488376

% \bibitem{Sheikholeslami_2013} Sheikholeslami, A., Tabari, A., et al. (2013). Practical solutions for reducing container ships' waiting times at ports using simulation model. \textit{Journal of Marine Science and Application}, 12(4), 434--444. https://doi.org/10.1007/s11804-013-1214-x

% \bibitem{Bakker_2024} Bakker, F. P., et al. (2024). Port accessibility depends on cascading interactions between fleets, policies, infrastructure, and hydrodynamics. \textit{Journal of Marine Science and Engineering}, 12(6), 1006. https://doi.org/10.3390/jmse12061006


\bibitem{Hassin_2003} Hassin, R., \& Haviv, M. (2003). To queue or not to queue: Equilibrium behaviour in queueing systems (Vol. 59). Springer Science \& Business Media.

\bibitem{Haviv_2021} Haviv, M., \& Ravner, L. (2021). A survey of queueing systems with strategic timing of arrivals. Queueing Systems, 99(1), 163-198. https://doi.org/10.1007/s11134-021-09717-8

\bibitem{Vickrey_1969} Vickrey, W. S. (1969). Congestion theory and transport investment. American Economic Review, 59(2), 251-260. https://www.jstor.org/stable/1823678

\bibitem{Arnott_1990} Arnott, R., De Palma, A., \& Lindsey, R. (1990). Economics of a bottleneck. Journal of Urban Economics, 27(1), 111-130. https://doi.org/10.1016/0094-1190(90)90028-L





\bibitem{Arnott_1993} Arnott, R., De Palma, A., \& Lindsey, R. (1993). A structural model of peak-period congestion: A traffic bottleneck with elastic demand. American Economic Review, 161-179. https://www.jstor.org/stable/2117502

\bibitem{Palma_2013} De Palma, A., \& Fosgerau, M. (2013). Random queues and risk averse users. European Journal of Operational Research, 230(2), 313-320. https://doi.org/10.1016/j.ejor.2013.04.041


\bibitem{Platz_2017} Platz, T. T., \& \O sterdal, L. P. (2017). The curse of the first-in–first-out queue discipline. Games and Economic behaviour, 104, 165-176. https://doi.org/10.1016/j.geb.2017.03.004

\bibitem{Wu_2019} Wu, C., Bassamboo, A., \& Perry, O. (2019). Service system with dependent service and patience times. Management Science, 65(3), 1151-1172. https://doi.org/10.1287/mnsc.2017.2983

\bibitem{Li_2020} Li, Z. C., Huang, H. J., \& Yang, H. (2020). Fifty years of the bottleneck model: A bibliometric review and future research directions. Transportation research part B: methodological, 139, 311-342. https://doi.org/10.1016/j.trb.2020.06.009

\bibitem{Yu_2025} Yu, X., van den Berg, V. A., \& Verhoef, E. T. (2025). Preference heterogeneity in a dynamic flow congestion model. Transportation Research Part B: Methodological, 195, 103193. https://doi.org/10.1016/j.trb.2025.103193


\bibitem{Rapoport_2004} Rapoport, A., Stein, W. E., Parco, J. E., \& Seale, D. A. (2004). Equilibrium play in single-server queues with endogenously determined arrival times. Journal of Economic behaviour \& Organization, 55(1), 67-91. https://doi.org/10.1016/j.jebo.2003.07.003

\bibitem{Otsubo_2008} Otsubo, H., \& Rapoport, A. (2008). Vickrey's model of traffic congestion discretized. Transportation Research Part B: Methodological, 42(10), 873-889. https://doi.org/10.1016/j.trb.2008.04.002

\bibitem{Rapoport_2010} Rapoport, A., Stein, W. E., Mak, V., Zwick, R., \& Seale, D. A. (2010). Endogenous arrivals in batch queues with constant or variable capacity. Transportation Research Part B: Methodological, 44(10), 1166-1185. https://doi.org/10.1016/j.trb.2010.01.005

\bibitem{Breinbjerg_2016} Breinbjerg, J., Sebald, A., \& \O sterdal, L. P. (2016). Strategic behaviour and social outcomes in a bottleneck queue: experimental evidence. Review of Economic Design, 20(3), 207-236. https://doi.org/10.1007/s10058-016-0190-4

\bibitem{Breinbjerg_2017} Breinbjerg, J. (2017). Equilibrium arrival times to queues with general service times and non-linear utility functions. European Journal of Operational Research, 261(2), 595-605. https://doi.org/10.1016/j.ejor.2017.03.010

\bibitem{Sakuma_2020} Sakuma, Y., Masuyama, H., \& Fukuda, E. (2020). A discrete-time single-server Poisson queueing game: Equilibria simulated by an agent-based model. European Journal of Operational Research, 283(1), 253-264. https://doi.org/10.1016/j.ejor.2019.11.003

\bibitem{Alon_2022} Alon, T., \& Haviv, M. (2022). Discrete-time strategic job arrivals to a single machine with waiting and lateness penalties. European Journal of Operational Research, 303(1), 480-486. https://doi.org/10.1016/j.ejor.2022.02.032

\bibitem{Alon_2023} Alon, T., \& Haviv, M. (2023). Choosing a batch to be processed. Annals of Operations Research, 326(1), 67-87. https://doi.org/10.1007/s10479-023-05274-w

\bibitem{Breinbjerg_2024} Breinbjerg, J., Platz, T. T., \& \O sterdal, L. P. (2024). Equilibrium arrivals to a last-come first-served preemptive-resume queue. Annals of Operations Research, 336(3), 1551-1572. https://doi.org/10.1007/s10479-023-05348-9

\bibitem{Ha_2001} Ha, A. Y. (2001). Optimal pricing that coordinates queues with customer-chosen service requirements. Management Science, 47(7), 915-930. https://doi.org/10.1287/mnsc.47.7.915.9806

\bibitem{Afeche_2013} Afeche, P. (2013). Incentive-compatible revenue management in queueing systems: Optimal strategic delay. Manufacturing \& Service Operations Management, 15(3), 423-443. https://doi.org/10.1287/msom.2013.0449

\bibitem{Gavirneni_2016} Gavirneni, S., \& Kulkarni, V. G. (2016). Self‐selecting priority queues with burr distributed waiting costs. Production and Operations Management, 25(6), 979-992. https://doi.org/10.1111/poms.12520

\bibitem{Wan_2017} Wan, G., \& Wang, Q. (2017). Two‐tier healthcare service systems and cost of waiting for patients. Applied Stochastic Models in Business and Industry, 33(2), 167-183. https://doi.org/10.1002/asmb.2231

\bibitem{Silva_2017} Silva, H. E., Lindsey, R., De Palma, A., \& Van den Berg, V. A. (2017). On the existence and uniqueness of equilibrium in the bottleneck model with atomic users. Transportation Science, 51(3), 863-881. https://doi.org/10.1287/trsc.2016.0672

\bibitem{Maglaras_2018} Maglaras, C., Yao, J., \& Zeevi, A. (2018). Optimal price and delay differentiation in large-scale queueing systems. Management Science, 64(5), 2427-2444. https://doi.org/10.1287/mnsc.2016.2713

\bibitem{Islier_2024} \.{I}\c{s}lier, Z. G., \& G\"{u}ll\"{u}, R. (2024). On strategic customers with correlated utility attributes: Effects and information benefits. European Journal of Operational Research, 313(1), 258-269. https://doi.org/10.1016/j.ejor.2023.08.015

\bibitem{Tuncalp_2024} Tun\c{c}alp, F., G\"{u}ne\c{s}, E. D., \& \"{O}rmeci, E. L. (2024). Modeling strategic walk-in patients in appointment systems: Equilibrium behaviour and capacity allocation. European Journal of Operational Research, 313(2), 587-601. https://doi.org/10.1016/j.ejor.2023.09.006

\bibitem{Naor_1969} Naor, P. (1969). The regulation of queue size by levying tolls. Econometrica, 37(1), 15-24. https://doi.org/10.2307/1909200

\bibitem{Edelson_1975} Edelson, N. M., \& Hilderbrand, D. K. (1975). Congestion tolls for Poisson queuing processes. Econometrica, 43(1), 81-92. https://doi.org/10.2307/1913415

\bibitem{Economou_2022} Economou, A. (2022). How much information should be given to the strategic customers of a queueing system?. Queueing Systems, 100(3), 421-423. https://doi.org/10.1007/s11134-022-09741-2


\bibitem{Hassin_1986} Hassin, R. (1986). Consumer information in markets with random product quality: The case of queues and balking. Econometrica, 1185-1195. https://doi.org/10.2307/1912327

\bibitem{Seale_2005} Seale, D. A., Parco, J. E., Stein, W. E., \& Rapoport, A. (2005). Joining a queue or staying out: Effects of information structure and service time on arrival and staying out decisions. Experimental Economics, 8(2), 117-144. https://doi.org/10.1007/s10683-005-0872-1

\bibitem{Stein_2007} Stein, W. E., Rapoport, A., Seale, D. A., Zhang, H., \& Zwick, R. (2007). Batch queues with choice of arrivals: Equilibrium analysis and experimental study. Games and Economic behaviour, 59(2), 345-363. https://doi.org/10.1016/j.geb.2006.08.008

\bibitem{Guo_2007} Guo, P., \& Zipkin, P. (2007). Analysis and comparison of queues with different levels of delay information. Management Science, 53(6), 962-970.https://doi.org/10.1287/mnsc.1060.0686

\bibitem{Simhon_2016} Simhon, E., Hayel, Y., Starobinski, D., \& Zhu, Q. (2016). Optimal information disclosure policies in strategic queueing games. Operations Research Letters, 44(1), 109-113. https://doi.org/10.1016/j.orl.2015.12.005

\bibitem{Kim_2017} Kim, B., \& Kim, J. (2017). Optimal information disclosure policies in a strategic queueing model. Operations Research Letters, 45(2), 181-186. https://doi.org/10.1016/j.orl.2017.02.003

\bibitem{Hu_2018} Hu, M., Li, Y., \& Wang, J. (2018). Efficient ignorance: Information heterogeneity in a queue. Management Science, 64(6), 2650-2671. https://doi.org/10.1287/mnsc.2017.2747

\bibitem{Hassin_2020} Hassin, R., \& Roet-Green, R. (2020). On queue-length information when customers travel to a queue. Manufacturing \& Service Operations Management, 23(4), 989-1004. https://doi.org/10.1287/msom.2020.0909

\bibitem{Dimitrakopoulos_2021} Dimitrakopoulos, Y., Economou, A., \& Leonardos, S. (2021). Strategic customer behaviour in a queueing system with alternating information structure. European Journal of Operational Research, 291(3), 1024-1040. https://doi.org/10.1016/j.ejor.2020.10.054

\bibitem{Economou_2024} Economou, A. (2024). The impact of information about last customer's decision on the join-or-balk dilemma in a queueing system. Annals of Operations Research, 1-24. https://doi.org/10.1007/s10479-024-06262-4

\bibitem{Zhen_2015} Zhen, L. (2015). Tactical berth allocation under uncertainty. European Journal of Operational Research, 247(3), 928-944. https://doi.org/10.1016/j.ejor.2015.05.079

\bibitem{Schelling_1980} Schelling, T., 1980. The Strategy of Conflict. Harvard University Press, Cambridge, MA (first edition 1960).

\bibitem{Psaraftis_2019} Psaraftis, H. N. (2019). Decarbonization of maritime transport: to be or not to be?. Maritime Economics \& Logistics, 21(3), 353-371. https://doi.org/10.1057/s41278-018-0098-8

\bibitem{Sugden_2006} Sugden, R., \& Zamarr\'{o}n, I. E. (2006). Finding the key: the riddle of focal points. Journal of Economic Psychology, 27(5), 609-621. https://doi.org/10.1016/j.joep.2006.04.003

\bibitem{Suchanek_2018} Suchanek, A., \& Entschew, E. M. (2018). Ethical focal points as a complement to accelerated social change. Humanistic Management Journal, 3(2), 221-232. https://doi.org/10.1007/s41463-018-0045-y

\bibitem{Psaraftis_2023} Psaraftis, H. N., \& Lagouvardou, S. (2023). Ship speed vs power or fuel consumption: Are laws of physics still valid? Regression analysis pitfalls and misguided policy implications. Cleaner Logistics and Supply Chain, 7, 100111. https://doi.org/10.1016/j.clscn.2023.100111

\bibitem{Senss_2023} Senss, A., Canbulat, O., Uzun, D., Gunbeyaz, S. A., \& Turan, O. (2023). Just in time vessel arrival system for dry bulk carriers. Journal of Shipping and Trade, 8, 12. https://doi.org/10.1186/s41072-023-00141-0

% \bibitem{Ravner_2024} Ravner, L., \& Snitkovsky, R. I. (2024). Stochastic approximation of symmetric nash equilibria in queueing games. Operations Research, 72(6), 2698-2725. https://doi.org/10.1287/opre.2021.0306

\end{thebibliography}

\appendix
\section{Proof of the Equation~(1)}
\label{app1}

\begin{proof}
    We prove by mathematical induction that the waiting time of any player $i\in N$ is given by Equation \eqref{eq:waiting_time}:
    \begin{equation*}
        w_i^g(s_i, s_{-i}) = \max_{1\le k\le g^{-1}(i)} \{\, s_{g(k)} + (g^{-1}(i)-k)\gamma - s_i \,\}.
    \end{equation*}

    First consider $g^{-1}(i)=1$. No one is ahead of $i$, hence $w_i^g(s_i, s_{-i})=0$. On the other hand,
    \begin{align*}
        \max_{1\le k\le g^{-1}(i)} \{ s_{g(k)} + (g^{-1}(i)-k)\gamma - s_i \}
        &= \max_{1\le k\le 1} \{ s_{g(k)} + (1-k)\gamma - s_i \}\\
        &= s_{g(1)} + (1-1)\gamma - s_i \;=\; 0.
    \end{align*}
    Thus, Equation~\eqref{eq:waiting_time} holds true.
    
    Assume Equation~\eqref{eq:waiting_time} holds for $g^{-1}(i)=j\in\mathbb{N}$. We show it also holds for $g^{-1}(i)=j+1$.
    When $g^{-1}(i)=j+1\ge 2$, service to vessel $i$ starts immediately if there is no queue upon arrival, otherwise it starts when the service of the immediately preceding vessel $g(g^{-1}(i)-1)\in N$ finishes.
    \begin{equation*}
        \therefore w_i^g(s_i, s_{-i})
        = \max\{\, 0,\; s_{g(g^{-1}(i)-1)} + w_{g(g^{-1}(i)-1)}^g(s_i, s_{-i}) + \gamma - s_i \,\}.
    \end{equation*}
    Because $g^{-1}(i)=j+1$ from assumption,
    \begin{equation*}
        w_i^g(s_i, s_{-i})
        = \max\{\, 0,\; s_{g(j)} + w_{g(j)}^g(s_i, s_{-i}) + \gamma - s_i \,\}.
    \end{equation*}
    By the induction hypothesis,
    \begin{align*}
        w_{g(j)}^g(s_i, s_{-i})
        &= \max_{1\le k\le g^{-1}(g(j))} \{\, s_{g(k)} + (g^{-1}(g(j)) - k)\gamma - s_{g(j)} \,\}\\
        &= \max_{1\le k\le j} \{\, s_{g(k)} + (j - k)\gamma - s_{g(j)} \,\}.
    \end{align*}
    Therefore,
    \begin{align*}
        w_i^g(s_i, s_{-i})
        &= \max\Bigl\{\, 0,\; s_{g(j)} + \max_{1\le k\le j} \{\, s_{g(k)} + (j-k)\gamma - s_{g(j)} \,\} + \gamma - s_i \Bigr\}\\
        &= \max\Bigl\{\, 0,\; \max_{1\le k\le j} \{\, s_{g(k)} + (j+1-k)\gamma - s_i \,\} \Bigr\}.
    \end{align*}
    Herein, note that $s_{g(j+1)} + (j+1-(j+1))\gamma - s_i = 0$ because $g(j+1)=i$.
    \begin{align*}
        \therefore w_i^g(s_i, s_{-i}) = &\max_{1\le k\le j+1} \{\, s_{g(k)} + (j+1-k)\gamma - s_i \,\}\\
        &= \max_{1\le k\le g^{-1}(i)} \{\, s_{g(k)} + (g^{-1}(i)-k)\gamma - s_i \,\}.
    \end{align*}
    Thus Equation~\eqref{eq:waiting_time} holds for $g^{-1}(i)=j+1$.
    The statement follows by induction.
\end{proof}

\section{Proof of Lemma~\ref{lem:faster_better}}
\label{app2}
\begin{proof}
    Suppose an arbitrary strategy profile $(s_1,\dots,s_n)$ and let $G(s_i, s_{-i})$ be the set of feasible service orders.
    If player $i\in N$ advances her arrival to some $s_i'<s_i$, let $G(s_i', s_{-i})$ be the resulting set of orders.
    Assume this change advances her expected service position $\mathbb{E}\bigl(g_{(s_i, s_{-i})}^{-1}(i)\bigr) > \mathbb{E}\bigl(g_{(s_i', s_{-i})}^{-1}(i)\bigr)$.
    From this assumption, following two statements holds:
    \begin{equation}\label{eq:minmax_g}
        \forall g'\in G(s_i', s_{-i}),\,\forall g\in G(s_i, s_{-i}),\, g^{-1}(i) \ge g'^{-1}(i),
    \end{equation}
    \begin{equation}\label{eq:noneq_g_g'}
        \exists g'\in G(s_i', s_{-i}),\,\exists g\in G(s_i, s_{-i}),\, g^{-1}(i) > g'^{-1}(i).
    \end{equation}
    Note that only player $i\in N$ updates her strategy, so the following holds.
    \begin{equation}\label{eq:others_const}
        \forall\, g\in G(s_i, s_{-i}),\; \forall\, g'\in G(s_i', s_{-i}),\; k\le g'^{-1}(i)-1,\Rightarrow s_{g'(k)} = s_{g(k)} .
    \end{equation}

    We now show that $u_i(s_i', s_{-i}) > u_i(s_i, s_{-i})$.
    From Equation~\eqref{eq:waiting_time} and \eqref{eq:others_const}, for any $g'\in G(s_i', s_{-i})$, the waiting time of player $i$ can be written as
    \begin{equation}\label{eq:waiting_time_ver2}
        w_i^{g'}(s_i', s_{-i})
        =
        \max\Bigl\{\, \max_{1\le k\le g'^{-1}(i)-1} \{\, s_{g'(k)} + (g'^{-1}(i)-k)\gamma - s_i' \,\},\; 0 \Bigr\}.
    \end{equation}
    Hence, for all $g\in G(s_i, s_{-i})$ and all $g'\in G(s_i', s_{-i})$,
    \begin{align}
        s_i + &w_i^{g}(s_i, s_{-i}) = s_i + \max_{1\le k\le g^{-1}(i)}\{s_{g(k)} + (g^{-1}(i) - k)\gamma - s_i\} \notag\\
        &= \max_{1\le k\le g^{-1}(i)}\{s_{g(k)} + (g^{-1}(i) - k)\gamma \} \notag \\
        &= \max\Big\{\max_{1\le k\le g'^{-1}(i)-1}\{s_{g(k)} + (g^{-1}(i) - k)\gamma\}, \notag \\
        &\hspace{25mm}\max_{g'^{-1}(i)\le k\le g^{-1}(i)}\{s_{g(k)} + (g^{-1}(i) - k)\gamma\}\Big\} \notag \\
        &\quad (\because \text{Equation~\eqref{eq:minmax_g}})\notag\\
        &\ge \max\Big\{\max_{1\le k\le g'^{-1}(i)-1}\{s_{g(k)} + (g'^{-1}(i) - k)\gamma\},\notag \\
        &\hspace{25mm} \max_{g'^{-1}(i)\le k\le g^{-1}(i)}\{s_{g(k)} + (g^{-1}(i) - k)\gamma\}\Big\}\label{eq:uneq_1}\\
        &\quad (\because \text{Equation~\eqref{eq:minmax_g}})\notag\\
        &= \max\Big\{\max_{1\le k\le g'^{-1}(i)-1}\{s_{g'(k)} + (g'^{-1}(i) - k)\gamma\},\notag\\
        &\hspace{25mm} \max_{g'^{-1}(i)\le k\le g^{-1}(i)}\{s_{g(k)} + (g^{-1}(i) - k)\gamma\}\Big\} \notag\\
        &\quad (\because \text{Equation~\eqref{eq:others_const}})\notag\\
        &\ge \max\Big\{\max_{1\le k\le g'^{-1}(i)-1}\{s_{g'(k)} + (g'^{-1}(i) - k)\gamma\}, s_i\Big\}\notag\\
        &\quad (\because\text{Just taking $k=g^{-1}(i)$ in the second term.})\notag\\
        &\ge \max\Big\{\max_{1\le k\le g'^{-1}(i)-1}\{s_{g'(k)} + (g'^{-1}(i) - k)\gamma\}, s_i'\Big\}\quad (\because s_i'< s_i) \notag\\
        &= s_i' + \max\Big\{\max_{1\le k\le g'^{-1}(i)-1}\{s_{g'(k)} + (g'^{-1}(i) - k)\gamma - s_i'\}, 0\Big\} \notag\\
        &= s_i' + w_i^{g'}(s_i', s_{-i})\quad (\because \text{Equation~\eqref{eq:waiting_time_ver2}})\label{eq:ineq_service_time}
    \end{align}

    Note that equality in Equation~\eqref{eq:uneq_1} holds only when $g'^{-1}(i)=g^{-1}(i)$, which is the case covered by \eqref{eq:minmax_g} but not covered by \eqref{eq:noneq_g_g'}.
    Now we define a non-empty set 
    \begin{equation*}
        H:= \{(g, g')\in G(s_i, s_{-i})\times G(s_i', s_{-i})\mid g^{-1}(i)>g'^{-1}(i)\}.
    \end{equation*}
    Multiplying both sides of the Equation~\eqref{eq:ineq_service_time} by a positive constant, we obtain
    \begin{equation}
        \frac{s_i + w_i^{g}(s_i, s_{-i})}{|G(s_i, s_{-i})|\, |G(s_i', s_{-i})|} \ge \frac{s_i' + w_i^{g'}(s_i', s_{-i})}{|G(s_i, s_{-i})|\, |G(s_i', s_{-i})|}\quad (\forall (g, g')\notin H),\label{eq:uneq_2}
    \end{equation}
    \begin{equation}
        \frac{s_i + w_i^{g}(s_i, s_{-i})}{|G(s_i, s_{-i})|\, |G(s_i', s_{-i})|} > \frac{s_i' + w_i^{g'}(s_i', s_{-i})}{|G(s_i, s_{-i})|\, |G(s_i', s_{-i})|}\quad (\forall (g, g')\in H).\label{eq:uneq_3}
    \end{equation}
    Summing up Equation~\eqref{eq:uneq_2} and \eqref{eq:uneq_3} for all elements, we obtain
    \begin{align*}
        &\sum_{(g, g')\notin H} \frac{s_i + w_i^{g}(s_i, s_{-i})}{|G(s_i, s_{-i})|\, |G(s_i', s_{-i})|} + \sum_{(g, g')\in H} \frac{s_i + w_i^{g}(s_i, s_{-i})}{|G(s_i, s_{-i})|\, |G(s_i', s_{-i})|}\\
        &> \sum_{(g, g')\notin H}\frac{s_i' + w_i^{g'}(s_i', s_{-i})}{|G(s_i, s_{-i})|\, |G(s_i', s_{-i})|} + \sum_{(g, g')\in H}\frac{s_i' + w_i^{g'}(s_i', s_{-i})}{|G(s_i, s_{-i})|\, |G(s_i', s_{-i})|}
    \end{align*}
    \begin{equation*}
        \therefore \sum_{g\in G(s_i, s_{-i})} \frac{s_i + w_i^{g}(s_i, s_{-i})}{|G(s_i, s_{-i})|} > \sum_{g'\in G(s_i', s_{-i})}\frac{s_i' + w_i^{g'}(s_i', s_{-i})}{|G(s_i', s_{-i})|}
    \end{equation*}
    \begin{equation*}
        \therefore s_i + \mathbb{E}(w_i(s_i, s_{-i})) > s_i' + \mathbb{E}(w_i(s_i', s_{-i}))
    \end{equation*}
\end{proof}

\section{Proof of Lemma~\ref{lem:no-takeover}}\label{app3}

\begin{proof}
    This is shown by contradiction.
    Suppose $\exists i\in N, s_i^* \notin \{t_i\}\cup (S_i\setminus S_{i+1})$.
    Please note that this means that $s_i^*\neq t_i$ and $s_i^*\ge t_{i+1}$.
    Under this assumption, we will show that some player can improve his expected service order by deviating from $(s_1^*, \dots, s_n^*)$.
    
    In the case $s_{i+1}^* < s_i^*$, $\mathbb{E}\big(g_{(s_i^*, s_{-i}^*)}^{-1}(i+1)\big) < \mathbb{E}\big(g_{(s_i^*, s_{-i}^*)}^{-1}(i)\big)$ holds true.
    Then, player $i$ should change his strategy as $s_i=t_i$ because the following holds true:
    \begin{equation*}
        \mathbb{E}\big(g_{(t_i, s_{-i}^*)}^{-1}(i)\big) \le \mathbb{E}\big(g_{(s_i^*, s_{-i}^*)}^{-1}(i+1)\big) < \mathbb{E}\big(g_{(s_i^*, s_{-i}^*)}^{-1}(i)\big)\quad (\because t_i < s_{i+1}^*).
    \end{equation*}

    Similarly, in the case $s_{i+1}^* = s_i^*$, player $i$ should change his strategy as $s_i=t_i$ because the following holds true:
    \begin{equation*}
        \mathbb{E}\big(g_{(t_i, s_{-i}^*)}^{-1}(i)\big) < \mathbb{E}\big(g_{(s_i^*, s_{-i}^*)}^{-1}(i+1)\big) = \mathbb{E}\big(g_{(s_i^*, s_{-i}^*)}^{-1}(i)\big)\quad (\because t_i < s_i^*).
    \end{equation*}

    Finally, in the case $s_{i+1}^* > s_i^* \ge t_{i+1}$, player $i+1$ should change his strategy as $s_{i+1}=t_{i+1}$ because the following holds true:
    \begin{equation*}
        \mathbb{E}\big(g_{(t_{i+1}, s_{-(i+1)}^*)}^{-1}(i+1)\big) \le \mathbb{E}\big(g_{(s_{i+1}^*, s_{-(i+1)}^*)}^{-1}(i)\big) < \mathbb{E}\big(g_{(s_{i+1}^*, s_{-(i+1)}^*)}^{-1}(i+1)\big).
    \end{equation*}

    Based on the Lemma~\ref{lem:faster_better}, above discussion guarantee the statement.
\end{proof}

\section{Proof of Theorem~\ref{thm:nash_equilibria}}\label{app4}
\begin{proof}
    Firstly, consider the case $\exists i, j\in N,\; t_i=t_j$.
    Without losing generality, suppose $i < j$.
    According to Lemma~\ref{lem:no-takeover}, $s_i^*=t_i$.
    Suppose $s_j > s_i^* = t_j$ and let $l:=\max_{g\in G(s_j, s_{-j}^*)} g^{-1}(i)$.
    Because $s_i^* < s_j$, $l < \min_{g\in G(s_j, s_{-j}^*)}g^{-1}(j)$.
    \begin{equation*}
        \therefore l+1\le \mathbb{E}(g_{s_j, s_{-j}^*}^{-1}(j)).
    \end{equation*}
    Suppose $s_j^* = t_j = s_i^*$, then $\max_{g\in G(s_j^*, s_{-j}^*)}g^{-1}(j)=l+1$ and $\min_{g\in G(s_j^*, s_{-j}^*)}g^{-1}(j)\le l$.
    \begin{equation*}
        \therefore \mathbb{E}(g_{s_j^*, s_{-j}^*}^{-1}(j)) < l+1 \le \mathbb{E}(g_{s_j, s_{-j}^*}^{-1}(j)).
    \end{equation*}
    Thus, from Lemma~\ref{lem:faster_better}, we obtain $s_j^* = t_j$.

    Next, suppose $t_1< t_2$.
    In this case, from Lemma~\ref{lem:no-takeover}, $s_1^*\in [t_1, t_2)$, which means that the player $1$ arrives first; $\forall g\in G(s^*),\; g^{-1}(1)=1$.
    Because there exists no queue in front of the first player, going as fast as possible maximises his utility.
    Thus, we obtain $s_1^* = t_1$.

    Finally, we consider a player $i\in N$ such that $t_{i-1} < t_i < t_{i+1}$.
    From Lemma~\ref{lem:no-takeover}, $\forall j\in\{1, \dots, i-1\},\; s_j^*<t_i$ and $s_i^*<t_{i+1}$ hold true.
    Thus, $\forall g\in G(s_i^*, s_{-i}^*),\; g^{-1}(i)=i$.
    \begin{align}
        \therefore \forall g\in G(s_i^*, s_{-i}^*),\, w_i^g(s_i^*, s_{-i}^*) &= \max_{1\le k\le g^{-1}(i)}\{s_{g(k)}^* + (g^{-1}(i) - k)\gamma - s_i^*\} \notag \\
        &= \max_{1\le k\le i}\{s_{g(k)}^* + (i - k)\gamma - s_i^*\} \quad  \label{eq:waiting_const}
    \end{align}

    Based on Lemma~\ref{lem:same_arrival_time}, Equation~\eqref{eq:waiting_const} means that player $i$'s waiting time $w_i^g(s_i^*, s_{-i}^*)$ does not depend on the service order $g\in G(s_i^*, s_{-i}^*)$.
    Therefore, we focus on a specific service order $g^*\in G(s_i^*, s_{-i}^*)$ such that $\forall k\in \{1, \dots, n\}, g^*(k) = k\in N$.
    Herein, $g^*\in G(s_i^*, s_{-i}^*)$ because $s_1^*\le \dots \le s_n^*$ is satisfied from Lemma~\ref{lem:no-takeover}.
    \begin{align}
        \therefore \mathbb{E}(w_i(s_i^*, s_{-i}^*)) &= \sum_{g\in G(s_i^*, s_{-i}^*)} \frac{1}{|G(s_i^*, s_{-i}^*)|} w_i^g(s_i^*, s_{-i}^*) \notag \\
        &= w_i^{g^*}(s_i^*, s_{-i}^*) \notag \\
        &= \max_{1\le k\le i}\{s_{g^*(k)}^* + (g^{*-1}(i) - k)\gamma - s_i^*\} \notag \\
        &= \max_{1\le k\le i}\left\{s_k^* + (i - k)\gamma - s_i^*\right\}
    \end{align}
    \begin{align}
        \therefore s_i^* + \mathbb{E}(w_i(s_i^*, s_{-i}^*)) &= \max_{1\le k\le i}\{s_k^* + (i - k)\gamma\} \notag\\
        &= \max\left\{\max_{1\le k\le i-1}\{s_k^* + (i - k)\gamma\}, s_i^*\right\}
    \end{align}
    Let $\theta_i := \max_{1\le k\le i-1}\{s_k^* + (i - k)\gamma\}$ which is independent to $s_i^*$.
    In the case $\theta_i \ge t_i$,
    \begin{equation}\label{eq:service_time}
        \max\{\theta_i, s_i^*\} = 
        \begin{cases}
            \theta_i\quad &(\text{if $t_i\le s_i^*\le \theta_i$}),\\
            s_i^*\quad &(\text{if $\theta_i < s_i^*$}).
        \end{cases}
    \end{equation}
    Thus, with Lemma~\ref{lem:no-takeover}, player $i$'s best response is described as $[t_i, \theta_i]\cap [t_i, t_{i+1})$.
    In the case $\theta < t_i$, $\max\{\theta_i, s_i^*\} = s_i^*$ because $s_i^* \ge t_i$, which leads $s_i^* = t_i$.

    Now we will show that an arbitrary strategy profile $s^* := (s_1^*, \dots, s_n^*) \in \prod_{i\in N}\Theta_i$ is a Nash equilibrium.
    In the case $i\neq j\in N,\; t_i=t_j$, $s_i^* = s_j^* = t_i$.
    If player $i$ changes his strategy as $s_i' > t_i$,
    \begin{equation*}
        \min_{g\in G(s_i', s_{-i}^*)} g^{-1}(i) = \max_{g\in G(s^*)} g^{-1}(i) =: l'.
    \end{equation*}
    \begin{equation*}
        \therefore \mathbb{E}(g^{-1}(s_i', s_{-i}^*)(i)) \ge l' > \mathbb{E}(g^{-1}(s^*)(i)).
    \end{equation*}
    Thus, from Lemma~\ref{lem:faster_better}, this change is not profitable for player $i$.

    We finally consider the case $t_{i-1}<t_i<t_{i+1}$.
    If player $i$'s updated strategy is $s_i'\in \Theta_i$, 
    \begin{align*}
        s_i' + \mathbb{E}(w_i(s_i', s_{-i}^*)) &= \theta_i = s_i^* + \mathbb{E}(w_i(s_i^*, s_{-i}^*))\quad (\because \text{Equation~\eqref{eq:service_time}}).
    \end{align*}
    Thus, this kind of update cannot improve player $i$'s utility.

    If player $i$'s updated strategy is $s_i'\notin \Theta_i$, $s_i'>s_i^*$ holds true.
    If $s_i' \ge s_{i+1}^*$, $\exists g\in G(s_i', s_{-i}^*),\; g^{-1}(i) \ge i+1$.
    \begin{equation}
        \therefore \mathbb{E}(g^{-1}(s_i', s_{-i}^*)(i)) > i = \mathbb{E}(g^{-1}(s^*)(i)).
    \end{equation}
    Thus again, from Lemma~\ref{lem:faster_better}, this change is not profitable for player $i$.

    If $s_i' < s_{i+1}^*$, $\forall g\in G(s_i', s_{-i}^*), g^{-1}(i)=i$.
    Therefore, $G(s_i', s_{-i}^*) = G(s^*)$ holds true.
    \begin{align*}
        \therefore \forall& g\in G(s^*),\; w_i^g(s^*) = \max_{1\le k\le g^{-1}(i)}\big\{\,s_{g(k)}^* + \big(g^{-1}(i) - k\big)\gamma - s_i^*\,\big\}\\
        &= \max\left\{\max_{1\le k\le i-1}\big\{\,s_{g(k)}^* + \big(i - k\big)\gamma - s_i^*\,\big\}, 0\right\}\\
        &= \max\left\{\max_{1\le k\le i-1}\big\{\,s_{g(k)}^* + \big(i - k\big)\gamma - s_i'\,\big\}+s_i' - s_i^*, 0\right\}\\
        &\le \max\left\{\max\left\{\max_{1\le k\le i-1}\big\{\,s_{g(k)}^* + \big(i - k\big)\gamma - s_i'\,\big\},\; 0\right\}+s_i' - s_i^*, 0\right\}\\
        &= \max\left\{w_i^g(s_i', s_{-i}^*)+s_i' - s_i^*, 0\right\}\\
        &= w_i^g(s_i', s_{-i}^*)+s_i' - s_i^*\quad (\because s_i'>s_i^*).
    \end{align*}
    Thus, $s_i^* + w_i^g(s^*) \le s_i' + w_i^g(s_i', s_{-i}^*)$ holds true, which means that $s_i'$ does not improve player $i$'s utility.
\end{proof}

\section{Proof of Theorem~\ref{thm:bayesian_nash_equilibrium}}\label{app5}

\begin{proof}
    First, we show that $\pi_i^*: t_i \mapsto t_i$ for all $i \in N$ constitutes a Bayesian Nash equilibrium.
    Fix a player $i \in N$ and an arbitrary admissible strategy $\pi_i : T \to S_i$.
    By feasibility of arrival times we have $\pi_i(t_i) \ge t_i$ for all $t_i \in T$, and $\pi_i^*(t_i) = t_i$ for all $t_i \in T$.

    Consider an arbitrary type profile $t = (t_i,t_{-i}) \in T^n$.
    If $\pi_i(t_i) = t_i$, then player $i$ takes the same action under $\pi_i$ and $\pi_i^*$, and hence
    \begin{equation*}
        u_i\bigl(\pi_i^*(t_i), \pi_{-i}^*(t_{-i})\bigr)
        =
        u_i\bigl(\pi_i(t_i), \pi_{-i}^*(t_{-i})\bigr).
    \end{equation*}
    If $\pi_i(t_i) > t_i$, then by Lemma~\ref{lem:faster_better}, going faster never makes player $i$ worse off given the others' strategies and strictly improves his expected utility whenever his expected position in the berth allocation is strictly improved.
    Since $\pi_i^*(t_i) = t_i < \pi_i(t_i)$, Lemma~\ref{lem:faster_better} implies
    \begin{equation*}
        u_i\bigl(\pi_i^*(t_i), \pi_{-i}^*(t_{-i})\bigr) \ge u_i\bigl(\pi_i(t_i), \pi_{-i}^*(t_{-i})\bigr)
    \end{equation*}
    for every $t \in T^n$, with strict inequality whenever player $i$'s expected berth position is strictly improved by switching from $\pi_i(t_i)$ to $t_i$.
    \begin{equation}\label{eq:app5_pointwise}
        \therefore \forall t\in T^n,\;u_i\bigl(\pi_i^*(t_i), \pi_{-i}^*(t_{-i})\bigr)
        \;\ge\;
        u_i\bigl(\pi_i(t_i), \pi_{-i}^*(t_{-i})\bigr).
    \end{equation}

    Let $P$ be the common prior on $T^n$ with density $p$ with respect to the Lebesgue measure, as defined in Section~\ref{subsec:incomplete-info-game}.
    Integrating inequality~\eqref{eq:app5_pointwise} with respect to $P$ yields
    \begin{equation}\label{eq:app5_BNE_weak}
        \int_{T^n} u_i\bigl(\pi_i^*(t_i), \pi_{-i}^*(t_{-i})\bigr)\,p(t)\,dt
        \;\ge\;
        \int_{T^n} u_i\bigl(\pi_i(t_i), \pi_{-i}^*(t_{-i})\bigr)\,p(t)\,dt.
    \end{equation}
    Hence $(\pi_i^*)_{i \in N}$ is a Bayesian Nash equilibrium in the sense defined by Equation~\ref{eq:def-bne}.

    Next, we show uniqueness among symmetric Bayesian Nash equilibria.
    Suppose that there exists a symmetric Bayesian Nash equilibrium $(\pi_i^s)_{i \in N}$ with $\pi_i^s = \pi_j^s$ for all $i,j \in N$, which is not equivalent to $(\pi_i^*)_{i \in N}$.
    Then the common strategy $\pi^s$ must differ from the identity map $t \mapsto t$ on a set of types with positive prior probability.
    Define
    \begin{equation*}
        D := \{ t \in T : \pi^s(t) > t \}.
    \end{equation*}
    And let $P_i$ denote the marginal distribution of $t_i$ under $P$, that is,
    \begin{equation*}
        P_i(B) := P\bigl(\{t \in T^n : t_i \in B\}\bigr)\quad \text{for all Borel sets } B \subseteq T.
    \end{equation*}

    If $P_i(D) = 0$ for the $i$-th marginal of $P$, then $\pi^s(t) = t$ for $P_i$-almost all $t$, that is, $(\pi_i^s)_{i \in N}$ is equivalent to $(\pi_i^*)_{i \in N}$, which contradicts the assumption that they are different.
    Hence $P_i(D) > 0$.
    Since $\pi^s(t) - t$ is nonnegative on $D$, there exists an integer $k \ge 1$ such that the set
    \begin{equation*}
        D_k := \{ t \in T : \pi^s(t) \ge t + 1/k \}
    \end{equation*}
    satisfies $P_i(D_k) > 0$.
    For each $t_i \in D_k$, the deviation from $\pi^s(t_i)$ to $t_i$ strictly reduces his arrival time by at least $1/k$.

    Consider player $i$ and define a deviation $\pi_i^\dagger : T \to S_i$ by
    \begin{equation*}
        \pi_i^\dagger(t_i)
        =
        \begin{cases}
            t_i & \text{if } t_i \in D_k,\\
            \pi^s(t_i) & \text{otherwise}.
        \end{cases}
    \end{equation*}
    Other players $j \ne i$ keep using $\pi_j^s = \pi^s$.
    For any type profile $t = (t_i,t_{-i})$ with $t_i \in D_k$, the deviation from $\pi^s(t_i)$ to $t_i$ weakly improves player $i$'s berth position and strictly improves it whenever there exists another player $j$ whose berth position lies between the two arrival times.
    Because the common prior has a strictly positive density on $T^n$, and because $P_i(D_k) > 0$, the set of type profiles for which $t_i \in D_k$ and at least one such $j$ exists has strictly positive prior probability.
    \footnote{
    Since $P$ has a density $p$ that is strictly positive almost everywhere, the marginal distribution of $t_i$ admits a density that is strictly positive almost everywhere, and the set $D_k$ must have positive Lebesgue measure.
    For any $t_i \in D_k$, the interval $(t_i, t_i + 1/k)$ is contained in $(t_i,\pi^s(t_i))$, and by Fubini's theorem the set
    \begin{equation*}
        E := \{ t \in T^n : t_i \in D_k,\ \exists j \ne i \text{ with } t_i < t_j < t_i + 1/k \}
    \end{equation*}
    has positive Lebesgue measure.
    Since $p(t) > 0$ for almost all $t \in T^n$, it follows that $P(E) = \int_E p(t)\,dt > 0$.}
    By Lemma~\ref{lem:faster_better}, player $i$'s utility strictly increases on this set and never decreases.
    Therefore,
    \begin{equation*}
        \int_{T^n} u_i\bigl(\pi_i^\dagger(t_i), \pi_{-i}^s(t_{-i})\bigr)\,p(t)\,dt
        \;>\;
        \int_{T^n} u_i\bigl(\pi_i^s(t_i), \pi_{-i}^s(t_{-i})\bigr)\,p(t)\,dt,
    \end{equation*}
    which contradicts the assumption that $(\pi_i^s)_{i \in N}$ is a Bayesian Nash equilibrium.
    Hence any symmetric Bayesian Nash equilibrium must coincide with $(\pi_i^*)_{i \in N}$ almost surely under $P$.

    % Collecting these arguments, we conclude that $(\pi_i^*)_{i \in N}$ is a Bayesian Nash equilibrium and that any symmetric Bayesian Nash equilibrium must coincide with $(\pi_i^*)_{i \in N}$ up to changes on sets of types with prior probability zero.
    % This proves Theorem~\ref{thm:bayesian_nash_equilibrium}.
\end{proof}

\end{document}